\newif\ifshort
\def\keywords{\smallskip\noindent\textsc{Key Words. }}
\let\oldsec\S
\def\isnum#1{%
  \if!\ifnum9<1#1!\else_\fi
    \ComplexityFont{#1}\else#1\fi}
\renewcommand{\EXP}[1][]{{\isnum{#1}\ComplexityFont{EXPTIME}}}
\renewcommand{\P}{\ComplexityFont{PTIME}}
\newclass{\TOWER}{TOWER}
\renewcommand{\S}{\oldsec}
\newcommand{\tuple}[1]{\left(#1\right)}
\newcommand{\avg}[1]{\mathsf{avg}(#1)}
\newcommand{\tsum}{\textstyle{\sum}}
\newcommand{\vsp}[1]{\mathsf{span}(#1)}
\newcommand{\cone}[1]{\mathsf{cone}(#1)}
\newcommand{\eqby}[1]{\stackrel{\text{{\tiny{#1}}}}{=}}
\newcommand{\eqdef}{\eqby{def}}
\renewcommand{\vec}[1]{\mathbf{#1}}
\renewcommand{\thmcontinues}[1]{%
    \hyperref[#1]{continued}%
}%
\declaretheorem[numberwithin=section]{theorem}
\declaretheorem[sibling=theorem]{proposition}
\declaretheorem[sibling=theorem]{lemma}
\declaretheorem[sibling=theorem]{corollary}
\declaretheorem[sibling=theorem]{fact}
\declaretheorem[sibling=theorem,style=remark,qed=\qedsymbol]{example}
\declaretheorem[sibling=theorem]{claim}
\declaretheoremstyle[
spaceabove=6pt,spacebelow=6pt,
headfont=\normalfont\bfseries,
notefont=\mdseries, notebraces={}{},
bodyfont=\normalfont,
postheadspace=1em,
numbered=no
]{problem}
\tikzstyle{triangle}=[rounded corners=.5,regular polygon,regular polygon
\tikzstyle{square}=[rounded corners=.5,regular polygon,regular polygon
\tikzstyle{every node}=[font=\small]
\tikzstyle{every edge}=[draw,->,>=stealth',shorten >=1pt,semithick]
\tikzstyle{accepting}=[accepting by arrow]
\tikzstyle{initial}=[initial by arrow,initial text=]
\def\NAT@spacechar{~}%
\def\@listi{\leftmargin\leftmargini
               \topsep 3\p@ \@plus\p@ \@minus\p@
               \parsep 2\p@ \@plus\p@ \@minus\p@
               \itemsep \parsep}
\providecommand{\urlstyle}[1]{}
\providecommand{\doi}[1]{\href{http://dx.doi.org/#1}{\nolinkurl{doi:#1}}}
\newcommand{\appref}{\autoref}
\renewcommand{\cite}{\citep}
\begin{document}
\renewcommand{\sectionautorefname}{Section}
\renewcommand{\subsectionautorefname}{Section}
\renewcommand{\subsubsectionautorefname}[1]{\S}
\title{Fixed-Dimensional Energy Games are in Pseudo-Polynomial Time}
\thanks{Work partially supported by ANR grant~11-BS02-001-01
  \textsc{ReacHard}, the Leverhulme Trust Visiting
  Professorship~1-2014-041, and the EPSRC grant~EP/M011801/1.}
\author[M.~Jurdzi\'nski]{Marcin Jurdzi\'nski}
\author[R.~Lazi\'c]{Ranko Lazi\'c} \address{DIMAP, Department of
  Computer Science, University of Warwick, UK}
\email{\{mju,lazic\}@dcs.warwick.ac.uk} \author[S.~Schmitz]{Sylvain
  Schmitz} \address{LSV, ENS Cachan \& CNRS \& INRIA, France}
\email{schmitz@lsv.ens-cachan.fr}

\begin{abstract}
We generalise the hyperplane separation technique (Chatterjee and
Velner, 2013) from multi-dimensional mean-payoff to energy games, and
achieve an algorithm for solving the latter whose running time is
exponential only in the dimension, but not in the number of vertices
of the game graph.  This answers an open question whether energy games
with arbitrary initial credit can be solved in pseudo-polynomial time
for fixed dimensions~3 or larger (Chaloupka, 2013).  It also improves
the complexity of solving multi-dimensional energy games with given
initial credit from non-elementary (Br\'azdil, Jan\v{c}ar, and
Ku\v{c}era, 2010) to \EXP[2], thus establishing
their \EXP[2]-completeness.

  \keywords{Energy game, bounding game, first-cycle game, vector
    addition system with states}
\end{abstract}
\maketitle

\section{Introduction}
\label{sec-intro}
\subsubsection*{Multi-Dimensional Energy Games\nopunct} are played
turn-by-turn by two players on a finite \emph{multi-weighted} game
graph, whose edges are labelled with integer vectors modelling
discrete energy consumption and refuelling.  Player~1's objective is
to keep the accumulated energy non-negative in every component along
infinite plays.  This setting is relevant to the synthesis of
resource-sensitive controllers balancing the usage of various
resources like fuel, time, money, or items in stock, and finding
optimal trade-offs;
see \citep{brazdil10,fahrenberg11,brazdil12,juhl13} for some examples.
Maybe more importantly, energy games are the key ingredient in the
study of several related resource-conscious games, notably
multi-dimensional mean-payoff games~\citep{chatterjee10} and games
played on vector addition systems with states
(VASS)~\citep{brazdil10,abdulla13,courtois14}.

The main open problem about these games has been to pinpoint the
complexity of deciding whether Player~1 has a winning strategy when
starting from a particular vertex and given an initial energy vector
as part of the input.  This particular \emph{given initial credit}
variant of energy games is also known as \emph{Z-reachability} VASS
games~\citep{brazdil10,chaloupka13}.  The problem is also equivalent
via logarithmic-space reductions to deciding \emph{single-sided} VASS
games with a non-termination objective~\citep{abdulla13}, and to
deciding whether a given VASS (or, equivalently, a Petri net)
simulates a given finite state
system~\ifshort\citep{courtois14,abdulla14}\else\citep{jancar95,lasota09,courtois14,abdulla14}\fi.
As shown by \citet*{brazdil10}, all these problems can be solved in
$\EXP[(d-1)]$ where $d\geq 2$ is the number of energy components,
i.e.\ a \TOWER\ of exponentials when $d$ is part of the input.  The
best known lower bound for this problem is \EXP[2]-hardness
~\cite{courtois14},
leaving a substantial complexity gap.  So far, the only tight
complexity bounds are for $d=2$: \citet{chaloupka13} shows
the problem to be \P-complete when using unit updates, i.e.\ when the
energy levels can only vary by $-1$, $0$, or $1$.  However,
quoting \citeauthor{chaloupka13}, `since the presented results about
2-dimensional VASS are relatively complicated, we suspect this
[general] problem is difficult.'

When inspecting the upper bound proof of \citet{brazdil10}, it turns
out that the main obstacle to closing the gap and proving
\EXP[2]-completeness lies in the complexity upper bounds for energy
games with an \emph{arbitrary initial credit}---which is actually the
variant commonly assumed when talking about energy games.  Given a
multi-weighted game graph and an initial vertex $v$, we now wish to
decide whether there exists an initial energy vector $\vec b$ such
that Player~1 has a winning strategy starting from the pair $(v,\vec
b)$.  As shown by \citet*{chatterjee10}, this variant is simpler: it
is \coNP-complete.  However, the parameterised complexity bounds in
the literature~\citep{brazdil10,chatterjee14} for this simpler problem
involve an exponential dependency on the number $|V|$ of vertices in
the input game graph, which translates into a tower of exponentials
when solving the given initial credit variant.

\ifshort\vspace*{-.5em}\fi
\subsubsection*{Contributions.}
We show in this paper that the arbitrary initial credit problem for
$d$-dimensional energy games can be solved in time $O(|V| \cdot
\|E\|)^{O(d^4)}$ where $|V|$ is the number of vertices of the
input multi-weighted game graph and $\|E\|$ the maximal value
that labels its edges, and also deduce that the given initial credit
problem is solvable in time $O(|V| \cdot \|E\|)^{2^{O(d \cdot \log d)}}$
(see \autoref{cor:energy.upper}).  Both bounds are pseudo-polynomial 
when the dimension is fixed, and the latter establishes \EXP[2]-completeness 
closing the gap left open in~\citep{brazdil10,courtois14}.
Our parameterised bounds are of practical interest because typical
instances of energy games would have small dimension but might have a
large number of vertices.
\ifshort\relax\else\par\fi
By the results of \citet{chatterjee10}, another consequence is that we
can decide the existence of a \emph{finite-memory} winning strategy
for fixed-dimensional \emph{mean-payoff} games in pseudo-polynomial
time.  The existence of a finite-memory winning strategy is the most
relevant problem for controller synthesis, but until now, solving
fixed-dimensional mean-payoff games in pseudo-polynomial time required
infinite memory strategies~\citep{chatterjee13}.

\ifshort\vspace*{-.5em}\fi
\subsubsection*{Overview.}  We prove our upper bounds on the
complexity of the arbitrary initial credit problem for $d$-dimensional
energy games by reducing them to \emph{bounding games}, where Player~1
additionally seeks to prevent arbitrarily high energy levels
(\autoref{sub:bounding}).  We further show these games to be
equivalent to \emph{first-cycle bounding games}
in \autoref{sec-fcycle}, where the total effect of the first simple
cycle defined by the two players determines the winner.  More
precisely, first-cycle bounding games rely on a hierarchically-defined
colouring of the game graph by \emph{perfect half-spaces}
(see \autoref{sec-pphs}), and the two players strive respectively to
avoid or produce cycles in those perfect half-spaces.

First-cycle bounding games coloured with perfect half-spaces can be
seen as generalising quite significantly both
\ifshort\relax\else\begin{itemize}\item\fi the `local strategy'
approach of \citet{chaloupka13} for 2-dimensional energy games, and
\ifshort\relax\else\item\fi the `separating hyperplane technique'
of \citet{chatterjee13} for multi-dimensional mean-payoff
games\ifshort.\else; see \appref{sub:meanpayoff} for an overview of the latter approach.\fi
\ifshort\relax\else\end{itemize}\fi

The reduction to first-cycle bounding games has several important
corollaries: the \emph{determinacy} of bounding games, and the
existence of a \emph{small hypercube property},
which in turn allow to derive the announced
complexity bounds on energy games (see \autoref{sec-cmplx}).  In fact,
we found with first-cycle bounding games a highly versatile tool, which
we use extensively in our proofs on energy games.

We start by presenting the necessary background on energy and bounding
games in \autoref{sec-wgames}.  Some omitted material \ifshort can be
found in the full paper available
from \url{http://arxiv.org/abs/1502.06875}.\else 
on linear algebra can be found in \autoref{sec-algebra}.\fi

\section{Multi-Weighted Games}
\label{sec-wgames}
We define in this section the various games we consider in this work.
We start by defining multi-weighted game graphs, which provide a
finite representation for the infinite arenas over which our games are
played.  We then define energy games in \autoref{sub:energy}, and
their generalisation as bounding games in \autoref{sub:bounding}.

\subsection{Multi-Weighted Game Graphs}
\label{sub:wgames}
We consider game graphs whose edges are labelled by vectors of integers.
They are tuples of the form $\tuple{V, E, d}$, where $d$ is the
dimension in $\+N$,
$V\eqdef V_1\uplus V_2$ is a finite set of vertices, which is
  partitioned into Player~$1$ vertices ($V_1$) and Player~$2$ vertices
  ($V_2$), and 
$E$ is a finite set of edges included in $V \times\+Z^d\times V$,
and such that every vertex has at least one outgoing edge; we call the
{\ifshort\relax\else edge\fi} labels in $\mathbb{Z}^d$ `weights'.

\begin{example}\label{ex-wgame}
  Figure~\ref{fig-wgame} shows an example of a
  2-dimensional multi-weighted game graph on its left-hand-side.  Throughout this paper,
  Player~1 vertices are depicted as triangles and Player~2 vertices as
  squares.
\end{example}\ifshort\vspace*{-.8em}\fi

\begin{figure}[tbp]
  \centering
  \begin{tikzpicture}[auto,on grid,node distance=3cm]
  \node[triangle](C){$v_0$};
  \node[square,left=of C](L){$v_L$};
  \node[square,right=of C](R){$v_R$};
  \path[->,every node/.style={font=\footnotesize}]
  (C) edge[swap] node{$\tuple{0,0}$} (L)
  (C) edge node{$\tuple{0,0}$} (R)
  (L) edge[bend left=40,draw=black!40!red] node{$\tuple{-2,2}$} (C)
  (L) edge[bend right=26,draw=black!40!orange,swap] node{$\tuple{-1,3}$} (C)
  (R) edge[bend right=40,draw=black!40!blue,swap] node{$\tuple{2,-1}$} (C)
  (R) edge[bend left=26,draw=black!40!cyan] node{$\tuple{3,-3}$} (C);
  \end{tikzpicture}\hspace*{2cm}
  \begin{tikzpicture}
    \draw[step=.25cm,gray!40,very thin] (-.9,-.9) grid (.9,.9);
    \draw (-.9,0) -- (.9,0);
    \draw (0,-.9) -- (0,.9);
    \draw[very thick,black!40!red,->] (0,0) -- (-.5cm,.5cm);
    \draw[very thick,black!40!orange,->] (0,0) -- (-.25cm,.75cm);
    \draw[very thick,black!40!blue,->] (0,0) -- (.5cm,-.25cm);
    \draw[very thick,black!40!cyan,->] (0,0) -- (.75cm,-.75cm);
  \end{tikzpicture}
  \caption{A 2-dimensional multi-weighted game
  graph.\label{fig-wgame}}
  \ifshort\vspace*{-1em}\fi
\end{figure}
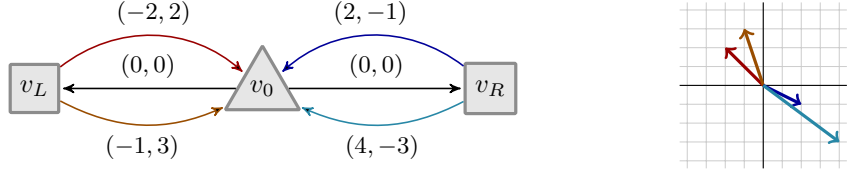

\subsubsection{Norms.}
For a vector $\vec{a}$, we denote the maximum absolute value of its
entries by $\|\vec{a}\|\eqdef\max_{1\leq i\leq d}|\vec a(i)|$, and we
call it the \emph{norm} of~$\vec{a}$.  By extension, for a set of
edges $E$, we let $\|E\|\eqdef\max_{(v,\vec u,v')\in E}\|\vec u\|$.
We assume, without loss of generality, that $\|E\| > 0$ in our
multi-weighted game graphs.  Regarding complexity, we encode
vectors of integers in binary, hence $\|E\|$ may be exponential in the
size of the multi-weighted game graph.
\ifshort\vspace*{-.2em}\fi

\subsubsection{Paths and Cycles.}
Given a multi-weighted game graph $\tuple{V, E, d}$, a
\emph{configuration} is a pair $(v,\vec a)$ with $v$ in $V$ and
\ifshort\pagebreak\fi 
$\vec a$ in $\+Z^d$.  A \emph{path} is a finite
sequence of configurations $\pi=(v_0,\vec a_0)(v_1,\vec
a_1)\cdots(v_n,\vec a_n)$ in $(V\times\+Z^d)^\ast$ such that for every
$0\leq j<n$ there exists an edge $(v_{j},\vec a_{j+1}-\vec
a_j,v_{j+1})$ in $E$ (where addition is performed componentwise).
The \emph{total weight} of such a path $\pi$ is
$w(\pi)\eqdef\sum_{0\leq j<n}\vec a_{j+1}-\vec a_j=\vec a_n-\vec a_0$.
\ifshort\relax\else\par\fi
A \emph{cycle} is a path $(v_0,\vec a_0)(v_1,\vec a_1)\cdots(v_n,\vec
a_n)$ with $v_0=v_n$.  Such a cycle is \emph{simple} if 
$v_j=v_{k}$ for some $0\leq j<k\leq n$ implies $j=0$ and $k=n$.  We
assume, without loss of generality, that every cycle contains at least
one Player~$1$ vertex.  We often identify simple cycles with their
respective weights; the weights of the four simple cycles of the game
graph in~\autoref{fig-wgame} are displayed on its right-hand-side.
\begin{proposition}
\label{pr:ssw}
In any game graph $\tuple{V, E, d}$,
the total weight of any simple cycle has norm at most
$|V| \cdot \|E\|$.
\end{proposition}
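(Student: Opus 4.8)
The plan is to bound the length of a simple cycle by the number of vertices, and then apply the triangle inequality for the max-norm to the edge weights.

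First I would unfold the definition: a simple cycle is a path $\pi = (v_0,\vec a_0)(v_1,\vec a_1)\cdots(v_n,\vec a_n)$ with $v_0 = v_n$ and with the property that $v_j = v_{j'}$ for $0 \le j < j' \le n$ forces $j = 0$ and $j' = n$. Hence the vertices $v_0, v_1, \ldots, v_{n-1}$ are pairwise distinct, so they form a subset of $V$ of size $n$, which gives $n \le |V|$. In particular the cycle is built from exactly $n$ edges.

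Next I would recall that for every $0 \le j < n$ the triple $(v_j, \vec a_{j+1} - \vec a_j, v_{j+1})$ is an edge of $E$, so each increment $\vec a_{j+1} - \vec a_j$ is an edge weight and therefore satisfies $\|\vec a_{j+1} - \vec a_j\| \le \|E\|$. Since $w(\pi) = \sum_{0 \le j < n} (\vec a_{j+1} - \vec a_j)$ and the max-norm is subadditive, we get
\[
  \|w(\pi)\| \;\le\; \sum_{0 \le j < n} \|\vec a_{j+1} - \vec a_j\| \;\le\; n \cdot \|E\| \;\le\; |V| \cdot \|E\| ,
\]
which is the claimed bound.

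There is no real obstacle here; the statement is essentially a counting observation combined with the triangle inequality. The only point worth stating carefully is why simplicity yields $n \le |V|$ rather than, say, $n \le |V| + 1$: it is because the repeated vertex $v_0 = v_n$ is counted once among the distinct vertices $v_0, \ldots, v_{n-1}$, so the number of \emph{edges} (which is what governs the number of summands) is at most $|V|$.
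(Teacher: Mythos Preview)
Your argument is correct. The paper states this proposition without proof, treating it as an immediate observation; your write-up supplies exactly the natural justification (simplicity bounds the number of edges by $|V|$, then subadditivity of the max-norm gives the bound).
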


\subsubsection{Plays and Strategies.} 
Let $v_0$ be a vertex from $V$.  A \emph{play from $v_0$} is an
infinite configuration sequence $\rho=(v_0,\vec a_0)(v_1,\vec
a_1)\cdots$ such that $\vec a_0=\vec 0$ is the null vector and every
finite prefix $\rho|_n\eqdef(v_0,\vec a_0)\cdots(v_n,\vec a_n)$ is a
path.  Note that, because $\vec a_0=\vec 0$, the total weight of this
prefix is $w(\rho|_n)=\vec a_n$.  We define the \emph{norm} of a play
$\rho$ as the supremum of the norms of total weights of its prefixes:
$\|\rho\|\eqdef\sup_n\|w(\rho|_n)\|$.
\ifshort\relax\else\par\fi
A \emph{strategy} for Player~$p$, $p\in\{1,2\}$, is a function
$\sigma_p$ taking as input a non-empty path $\pi\cdot(v,\vec a)$
ending in a Player~$p$ vertex $v\in V_p$, and returning an edge
$\sigma_p(\pi\cdot(v,\vec a))=(v,\vec u,v')$ from $E$.
\ifshort We employ the usual notions of plays \emph{consistent} with
strategies, and given some winning condition on plays,
of \emph{winning} strategies for a player.
\else A play $\rho=(v_0,\vec a_0)(v_1,\vec a_1)\cdots$ is \emph{consistent}
with a strategy $\sigma_p$ for Player~$p$ if whenever $v_n$ is a
Player~$p$ vertex in $V_p$, then $\sigma_p(\rho|_n)=(v_n,\vec
a_{n+1}-\vec a_n,v_{n+1})$.
Given strategies $\sigma_1$ and $\sigma_2$ for Player~1 and
Player~2 respectively, and an initial vertex $v_0$, observe
that there is a unique play $\rho_{v_0,\sigma_1,\sigma_2}$ from $v_0$
consistent with both $\sigma_1$ and~$\sigma_2$.\fi

\ifshort\setcounter{theorem}{0}\fi
\begin{example}[continues=ex-wgame]
For instance, in the game graph depicted in \autoref{fig-wgame}, a
strategy for Player~1 could be to move to $v_L$ whenever the current
energy level on the first coordinate is non-negative, and to $v_R$
otherwise---note that this is an \emph{infinite-memory} strategy---:
\begin{align}\label{eq-strategy1}
  \sigma_1(\pi\cdot(v_0,\vec
  a))&\eqdef\begin{cases}(v_0,\tuple{0,0},v_L)&\text{if $\vec a(1)\geq
  0$,}\\
  (v_0,\tuple{0,0},v_R)&\text{otherwise,}\end{cases} \intertext{and
  one for Player~2 could be to always select one particular edge in
  every vertex, regardless of the current energy vector---this is
  called a \emph{counterless}
  strategy~\citep{brazdil10}---:}\label{eq-strategy2} \sigma_2(\pi\cdot(v,\vec
  a))&\eqdef\begin{cases}(v_L,\tuple{-2,2},v_0)&\text{if $v=v_L$}\\
  (v_R,\tuple{2,-1},v_0)&\text{otherwise.}\end{cases}\ifshort\qedhere\fi
\end{align}\ifshort\relax\else
  These strategies define a unique consistent play for $v_0$, which starts with
\begin{equation}
  (v_0,0,0)(v_L,0,0)(v_0,-2,2)(v_R,-2,2)(v_0,0,1)(v_L,0,1)(v_0,-2,3)\cdots\qedhere
\end{equation}\fi
\end{example}
\ifshort\relax\else
In the following we consider several different winning conditions on
plays, which define different games played on multi-weighted game
graphs.\fi

\subsection{Multi-Dimensional Energy Games}
\label{sub:energy}
Suppose $\tuple{V, E, d}$ is a multi-weighted game graph, $v_0$ an
initial vertex, and $\vec{b}$ is a vector from $\+N^d$.  A play $\rho$
from $v_0$ is \emph{winning} for Player~1 in
the \emph{energy game} $\Delta_{\vec{b}}\!\tuple{V, E, d}$
with \emph{initial credit} $\vec b$ if, for all $n$, $\vec
b+w(\rho|_n)\geq\vec 0$, using the product ordering over $\+Z^d$.
Otherwise, Player~2 wins the play.
\ifshort\relax\else As usual, this means that Player~1 wins the energy game $\Delta_{\vec
b}\!\tuple{V,E,d}$ from $v_0$ if there exists a \emph{winning
strategy} $\sigma_1$ for Player~1, i.e.\ $\sigma_1$ is such that for
all strategies $\sigma_2$ for Player~2 the play
$\rho_{v_0,\sigma_1,\sigma_2}$ is winning for Player~1.  \fi An
immediate property of energy games is \emph{monotonicity}: if
$\sigma_1$ is winning for Player~1 with some initial credit $\vec b$,
and $\vec b'\geq\vec b$, then it is also winning for Player~1 with
initial credit~$\vec b'$.
\ifshort\setcounter{theorem}{0}\fi
\begin{example}[continues=ex-wgame]
For example, one may observe that the strategy~\eqref{eq-strategy1}
for Player~1 is winning for the game graph of \autoref{fig-wgame} with
initial credit $\tuple{2,2}$ (or larger).
\ifshort\relax\else\par\fi
A geometric intuition comes from the directions of the total weights
\ifshort\pagebreak\fi
of simple cycles in \autoref{fig-wgame}: by choosing alternatively
edges to $v_L$ or $v_R$, Player~1 is able to balance the energy levels
above the `$x+y=0$' line.\ifshort\relax\else\ One way to see this more
formally is to build the corresponding \emph{self-covering strategy
tree} up to the first time when a configuration is greater or equal to
another configuration higher in the tree~\citep{brazdil10}.  By
monotonicity of the game, Player~1 can repeat the same actions
from those leaves.  See \autoref{fig-strat-tree} for our example.

Strategy $\sigma_1$ uses the comparison of $\vec a(1)$ with $0$ as
a \emph{soft bound} to trigger a change of strategy and attempt to
forbid cycles with a negative effect on the first coordinate.  Note
that the energy level $\vec a(1)$ might nevertheless become less than
$0$, but will remain $\geq -2$ at all times; we call this
the \emph{hard bound}.  This follows the general scheme
of \citet{chaloupka13}---and also ours---for Player~1 strategies.
\begin{figure}[tbp]
  \centering
  \begin{tikzpicture}[on grid,every
    node/.style={font=\footnotesize,inner sep=1pt},level distance=.8cm,thick]
    \node(r){$\tuple{v_0,0,0}$}
    child{node{$\tuple{v_L,0,0}$}
      child[sibling distance=3.2cm,draw=black!40!red]{node(n){$\tuple{v_0,-2,2}$}
        child[draw=black]{node{$\tuple{v_R,-2,2}$}
          child[sibling distance=1.6cm,draw=black!40!blue]{node(l1){$\tuple{v_0,0,1}$}}
          child[sibling distance=1.6cm,draw=black!40!cyan]{node{$\tuple{v_0,1,-1}$}
            child[draw=black]{node{$\tuple{v_L,1,-1}$}
              child[draw=black!40!red]{node{$\tuple{v_0,-1,1}$}
                child[draw=black]{node{$\tuple{v_R,-1,1}$}
                  child[sibling distance=1.6cm,draw=black!40!blue]{node(l2){$\tuple{v_0,1,0}$}}
                  child[sibling distance=1.6cm,draw=black!40!cyan]{node{$\tuple{v_0,2,-2}$}
                    child[draw=black]{node{$\tuple{v_L,2,-2}$}
                      child[sibling distance=1.6cm,draw=black!40!red]{node(l6){$\tuple{v_0,0,0}$}}
                      child[sibling distance=1.6cm,draw=black!40!orange]{node(l7){$\tuple{v_0,1,1}$}}}}}}
              child[draw=black!40!orange]{node(l3){$\tuple{v_0,0,2}$}}}}}}
      child[sibling distance=3.2cm,draw=black!40!orange]{node{$\tuple{v_0,-1,3}$}
        child[draw=black]{node{$\tuple{v_R,-1,3}$}
          child[sibling distance=1.6cm,draw=black!40!blue]{node(l4){$\tuple{v_0,1,2}$}}
          child[sibling distance=1.6cm,draw=black!40!cyan]{node(l5){$\tuple{v_0,2,0}$}}}}};
    \path[dotted,->,color=gray,every node/.style={font=\tiny,inner sep=.2pt},thin]
      (l1) edge[bend left=50] node[fill=white]{$\leq$} (r)
      (l2.west) edge[bend left=80] node[fill=white]{$\leq$} (r.west)
      (l3) edge[bend right=28] node[fill=white]{$\leq$} (n.east)
      (l4) edge node[fill=white]{$\leq$} (r)
      (l5) edge[bend right=50] node[fill=white]{$\leq$} (r)
      (l6) edge[bend left=100] node[fill=white]{$\leq$} (r)
      (l7) edge[bend right=100] node[fill=white]{$\leq$} (r);           
  \end{tikzpicture}
  \caption{Self-covering strategy tree for Player~1 in the energy
  game of \autoref{fig-wgame}.\label{fig-strat-tree}}
\end{figure}
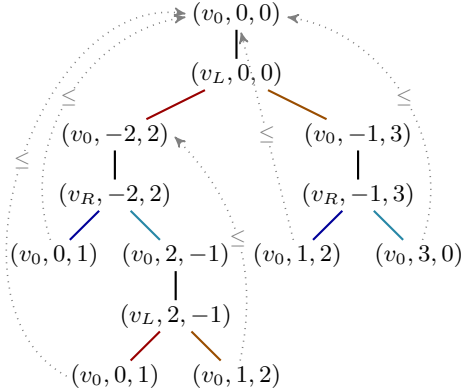\fi
\end{example}

\subsection{Multi-Dimensional Bounding Games}
\label{sub:bounding}
A generalisation of energy games sometimes considered in the
literature is to further impose a maximal \emph{capacity} $\vec
c\in\+N^d$ (also called an upper bound) on the energy levels during
the play~\cite{fahrenberg11,juhl13}.  Player~1 then wins a
play $\rho$ if $0\leq\vec b+w(\rho|_n)\leq\vec c$ for all $n$.

In the spirit of the arbitrary initial credit variant of energy games,
we also quantify $\vec c$ existentially.  This defines the
\emph{bounding game} $\Gamma\!\tuple{V, E, d}$ over a multi-weighted
game graph $\tuple{V, E, d}$, where a play $\rho$ is winning for
Player~1 if its norm $\|\rho\|$ is finite, i.e.\ if the set
$\{\|w(\rho {\mid_n})\| \,:\, n \in \mathbb{N}\}$ of norms of total
weights of all finite prefixes of $\rho$ is bounded, and Player~2 wins
otherwise, i.e.\ if the set is unbounded.  In other words, Player~1 strives to
contain the current vector within some $d$-dimensional hypercube,
while Player~2 attempts to escape.

\ifshort\setcounter{theorem}{0}\fi
\begin{example}[continues=ex-wgame]
Note that Player~2 is now winning the bounding game defined by the
game graph of \autoref{fig-wgame} from any of the three vertices, for
example using the strategy~\eqref{eq-strategy2}.  Indeed, this strategy
ensures that the only simple cycles that can be played have
weights~$(-2,2)$ and~$(2,-1)$.  Because these vectors belong to an open
half-plane, the total energy will drift deeper and deeper inside that
open half-plane and its norm will grow unbounded.
\end{example}

\begin{figure}[tbp]
  \centering
  \begin{tikzpicture}[auto,on grid]
    \node[triangle](L){$v_L$};
    \node[triangle, right=3cm of L](R){$v_R$};
    \path[->,every node/.style={font=\footnotesize}]
      (L) edge[bend left=25,draw=black!40!cyan] node{$\tuple{-1,0}$} (R)
      (R) edge[bend left=25,draw=black!40!cyan] node{$\tuple{0,-1}$} (L)
      (L) edge[loop above,draw=black!40!blue] node{$\tuple{1,-1}$} ()
      (R) edge[loop above,draw=black!40!violet] node{$\tuple{-1,1}$} ();
  \end{tikzpicture}\hspace*{2cm}
  \begin{tikzpicture}
    \draw[step=.5cm,gray!40,very thin] (-.6,-.6) grid (.6,.6);
    \draw (-.6,0) -- (.6,0);
    \draw (0,-.6) -- (0,.6);
    \draw[very thick,black!40!violet,->] (0,0) -- (-.5cm,.5cm);
    \draw[very thick,black!40!cyan,->] (0,0) -- (-.5cm,-.5cm);
    \draw[very thick,black!40!blue,->] (0,0) -- (.5cm,-.5cm);
    \node at (0,-1.2) {};
  \end{tikzpicture}\vspace*{-.5em}
  \caption{A 2-dimensional game graph with only Player~1
    vertices.\label{fig:mean}}\ifshort\vspace*{-1.2em}\fi
\end{figure}
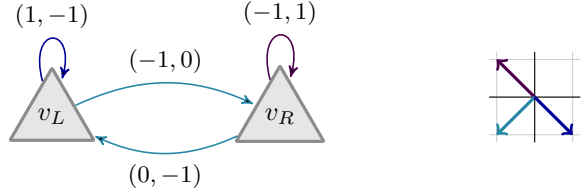

\begin{example}\label{ex:mean}
As a rather different example, consider the multi-weighted game graph
of \autoref{fig:mean}.  Although Player~2 does not control any vertex,
and Player~1 controls the `direction of divergence', Player~2
\ifshort\pagebreak\fi
wins the associated bounding game.
Indeed, Player~1 can either eventually stay forever at one of the two
vertices, or visit both vertices infinitely often.  Anyway, she
loses.
\end{example}

\section{Complexity Upper Bounds}\label{sec-cmplx}

Our main results are new parameterised complexity upper bounds for
deciding whether Player~1 has a winning strategy in a given energy
game.  In turn, we rely for these results on a \emph{small hypercube
property} of bounding games, which we introduce next, and which will
be a consequence of the study of first-cycle bounding games
in \autoref{sec-fcycle}.

\subsection{Small Hypercube Property}
In a bounding game, if Player~1 is winning, then by definition she has
a winning strategy $\sigma_1$ such that for all plays $\rho$
consistent with $\sigma_1$ there exists some bound $B_\rho$ with
$\|\rho\|\leq B_\rho$.  We considerably strengthen this statement
in \autoref{sec-fcycle} where we construct an explicit winning
strategy, which yields an explicit \emph{uniform} bound~$B$ for all
consistent plays:
\begin{lemma}\label{lem-shp}
  Let $\tuple{V,E,d}$ be a multi-weighted game graph.  If Player~1
  wins the bounding game $\Gamma\!\tuple{V,E,d}$, then she has a
  winning strategy which ensures \ifshort$\else\begin{equation*}\fi\|\rho\|\leq (4 |V| \cdot \|E\|)^{2
  (d + 2)^3}\ifshort$\else\end{equation*}\fi for all consistent plays $\rho$.
\end{lemma}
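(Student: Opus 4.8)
The plan is to read off this explicit bound from the analysis of first-cycle bounding games and perfect half-spaces developed in \autoref{sec-fcycle} and \autoref{sec-pphs}; here I sketch how the pieces fit together. Recall from the overview that $\Gamma\!\tuple{V,E,d}$ is equivalent to a first-cycle bounding game over the same graph: the play stops the first time the two players close a simple cycle, and the winner is decided by whether the total weight of that cycle lies in the perfect half-space with which a hierarchical colouring of the arena labels it. Since Player~1 wins $\Gamma\!\tuple{V,E,d}$, she wins this first-cycle game as well; and since a first-cycle game is a finite-duration game on a finite arena --- the current vertex together with the simple path followed since the last repetition, a sequence of at most $|V|$ vertices --- she has a memoryless winning strategy $\tau$ there, i.e.\ a finite-memory one on $\tuple{V,E,d}$.

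First I would turn $\tau$ into an infinite-game strategy $\sigma_1$ by the standard prune-and-repeat scheme: follow $\tau$, and each time a simple cycle is closed --- necessarily a cycle that $\tau$'s colour declares winning for Player~1 --- delete it from the history and keep following $\tau$ from the truncated play. Every play $\rho$ consistent with $\sigma_1$ is then a shuffle of (i)~simple-path segments spanning at most $|V|$ vertices, each displacing the running total by a vector of norm $< |V|\cdot\|E\|$, and (ii)~the total weights of the excised simple cycles, each of norm $\le |V|\cdot\|E\|$ by \autoref{pr:ssw}. Bounding $\|\rho\|$ thus reduces to bounding, uniformly in $\rho$, the partial sums of the excised-cycle weights.

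This last step is the one I expect to be the main obstacle, and it is where the perfect half-spaces are essential. Two ingredients are needed. First, a complexity bound: the hierarchy of perfect half-spaces for the finite set $S$ of simple-cycle weights of the graph (all of norm $\le N := |V|\cdot\|E\|$ by \autoref{pr:ssw}) has depth at most $d$, and at each level the separating half-space can be taken with a defining normal of norm polynomially bounded in $N$, via a Cram\'er-type estimate on subsystems built from the vectors of $S$. Second, a stability bound: concatenating simple cycles that are \emph{good} at successive levels of this hierarchy keeps the running total confined, the confining hypercube at level $k$ being obtained from the one at level $k{+}1$ by a multiplicative factor polynomial in $N$; iterating over the $\le d$ levels and adding the $|V|$-bounded contribution of the fresh segments from~(i) yields a uniform bound of the claimed shape $(4N)^{2(d+2)^3}$, the exponent $2(d+2)^3$ recording the depth-$d$ recursion together with a per-level blow-up of polynomial degree roughly two in~$d$. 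Determinacy of bounding games, also obtained in \autoref{sec-fcycle}, is used silently to guarantee that Player~1 indeed has such a strategy, while the equivalence with first-cycle games, memoryless determinacy on the finite arena, and \autoref{pr:ssw} are invoked as black boxes.
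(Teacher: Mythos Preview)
Your high-level outline is on the right track---first-cycle bounding games, determinacy, and the prune-and-repeat scheme are indeed the scaffolding---but two concrete points are wrong or missing, and the second is where the whole difficulty lies.

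First, your description of the first-cycle game is off: the perfect half-spaces are \emph{not} a fixed hierarchical colouring of the arena.  In the paper's game $G\!\tuple{V,E,d}$, Player~2 announces a colour at each Player-1 vertex during play, and Player~1 wins if the resulting cycle's weight escapes the least common ancestor of all colours used.  Thus Player~1's first-cycle strategy $\sigma$ is a function of the simple path \emph{and} the colour sequence chosen by Player~2, not of the path alone.

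Second, and this is the real gap, your ``stability bound'' paragraph is precisely the technical heart of \autoref{lem-p1} and you have not supplied it.  When Player~1 plays $\sigma$ in the infinite game she must herself pick a colour $\mathcal H=H_d\cup\cdots\cup H_1$ to feed to $\sigma$; every cycle then has weight outside $\mathcal H$, i.e.\ in $-\mathcal H\cup\{\vec 0\}$.  But sums of vectors in $-\mathcal H$ are in general unbounded, so fixing one colour and cutting cycles is not enough.  The paper's strategy $\widetilde\sigma$ maintains counters $\mathsf c(k,W)$ for every level $k$ and cycle weight $W$, and dynamically \emph{changes} the colour: when some $W\in\widehat{H_k}=\vsp{H_k}\setminus\overline{H_k}$ exceeds the $k$-soft bound $\mathcal U(k)$, it either performs a $k$-shift to a new $H'_k$ that puts all overshooting weights into $\overline{H'_k}$, or, when no such $H'_k$ exists, invokes the alternatives lemma (\autoref{l:alt}) and small-solutions lemma (\autoref{l:small}) to find a positive integer combination of the overshooting weights summing to $\vec 0$ and subtracts it from the counters.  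The exponent $2(d+2)^3$ arises from the nested induction over $k$ showing the $k$-hard bound $\mathcal U(k)+u(k)$ holds (\autoref{cl-player1}), using the bound $\mathcal L(k)$ on the number of $M$-generated half-spaces per level (\autoref{pr:sib}).  None of this machinery---the dynamic colour, the counters, the shift/cancellation dichotomy, the appeal to \autoref{l:alt} and \autoref{l:small}---appears in your sketch; the phrase ``concatenating simple cycles that are good at successive levels keeps the running total confined'' does not correspond to any argument that actually works.
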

Note that our bound is polynomial in $|V|$ the number of vertices,
unlike the bounds found in comparable statements 
by \citet[\lemmaautorefname~7]{brazdil10}
and \citet[\lemmaautorefname~3]{chatterjee14}, which incur
an exponential dependence on $|V|$.  This entails pseudo-polynomial 
complexity bounds when $d$ is fixed:

\begin{corollary}
\label{cor:haupt}
  Bounding games on multi-weighted graphs $\tuple{V, E, d}$ are
  solvable in \ifshort\relax\else deterministic \fi time
  $O(|V| \cdot \|E\|)^{O(d^4)}$.
\end{corollary}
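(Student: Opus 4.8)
The plan is to reduce the bounding game to a \emph{finite} safety game and then appeal to the classical linear-time algorithm for safety (equivalently, reachability) games on explicit finite arenas; the bridge between the infinite game and its finite truncation is the small hypercube property of \autoref{lem-shp}. Write $B \eqdef (4|V| \cdot \|E\|)^{2(d+2)^3}$ for the bound it provides. First I would build the finite game graph $\widehat{G}$ whose vertices are a sink $\bot$ together with all pairs $(v,\vec a)$ for $v \in V$ and $\vec a \in \{-B,\dots,B\}^d$; the owner of $(v,\vec a)$ is inherited from $v$, and $\bot$ may be given to either player. For every edge $(v,\vec u,v') \in E$ and every $\vec a$ in the grid, $\widehat{G}$ has an edge from $(v,\vec a)$ to $(v',\vec a + \vec u)$ if $\|\vec a + \vec u\| \leq B$, and an edge from $(v,\vec a)$ to $\bot$ otherwise; finally $\bot$ carries a self-loop, so every vertex has an outgoing edge. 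Declare $\bot$ the unique bad vertex and consider the safety game on $\widehat{G}$ in which Player~1 wins a play iff it never visits $\bot$.

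The correctness claim is that Player~1 wins $\Gamma\!\tuple{V,E,d}$ from a vertex $v_0$ if and only if Player~1 wins this safety game from $(v_0,\vec 0)$. The ``if'' direction is immediate: a Player~1 strategy on $\widehat{G}$ that avoids $\bot$ projects to a strategy in $\tuple{V,E,d}$ along which every prefix has norm at most $B$, so every consistent play has finite norm and is winning. The ``only if'' direction is exactly \autoref{lem-shp}: a Player~1 strategy witnessing $\|\rho\| \leq B$ on all consistent plays lifts to a strategy on $\widehat{G}$ that never leaves the grid, hence never reaches $\bot$. Since $\widehat{G}$ is finite, this biconditional lets us decide the bounding game by computing, via the standard backward fixpoint, the attractor set from which Player~2 can force a visit to $\bot$, in time linear in the number of edges of $\widehat{G}$.

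It remains to bound the size of $\widehat{G}$, which has $|V| \cdot (2B+1)^d + 1$ vertices and at most $|E| \cdot (2B+1)^d + 1$ edges. Since $(2B+1)^d \leq (4B)^d$ and $\log B = 2(d+2)^3 \log(4|V| \cdot \|E\|)$, we get $\log\bigl((4B)^d\bigr) = d\log(4B) = O(d^4)\,\log(|V| \cdot \|E\|)$, so $\widehat{G}$ has $(|V| \cdot \|E\|)^{O(d^4)}$ edges and the attractor computation runs within the same time bound, yielding the claimed $(|V| \cdot \|E\|)^{O(d^4)}$ complexity. I expect no real obstacle here: all the difficulty is concentrated in \autoref{lem-shp} (proved in \autoref{sec-fcycle}), and what remains is only the routine observation that ``winning while staying inside the hypercube of radius $B$'' coincides with winning the bounding game---whose non-trivial half is precisely \autoref{lem-shp}---together with the arithmetic witnessing that the truncated arena is singly exponential in $d^4 \log(|V| \cdot \|E\|)$.
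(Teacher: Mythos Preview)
Your proposal is correct and follows essentially the same approach as the paper: reduce, via \autoref{lem-shp}, to a finite reachability/safety game on the hypercube of radius $B=(4|V|\cdot\|E\|)^{2(d+2)^3}$ and solve it by the standard attractor computation. Your write-up is simply more explicit about the arena construction and the two directions of the equivalence than the paper's two-sentence proof.
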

\begin{proof}
By \autoref{lem-shp}, the bounding game is equivalent to a
reachability game where Player~2 attempts to see the norm of the total
weight exceed $B\eqdef(4 |V| \cdot \|E\|)^{2 (d + 2)^3}$.  This can be
played within a finite arena of size $(2B+1)^d$ and solved in time linear
in that size using the usual attractor computation algorithm.
\end{proof}

\subsection{Solving Energy Games}
We now show how energy games can be solved by
solving bounding games on appropriately augmented game graphs.
Given an energy game and an initial vertex, there are two standard decision problems:
\begin{description}
\item[arbitrary initial credit]
does Player~1 win for some initial energy vector;
\item[given initial credit]
does Player~1 win for a given initial energy vector?
\end{description}
Another problem of interest is computing the Pareto limit, i.e. the set of all pointwise minimal initial energy vectors for which Player~1 wins.

It will turn out that solving the arbitrary initial credit problem is relatively easy
since it amounts to solving the bounding game with self-loops added at all Player~1 vertices
that give her power to prevent Player~2 from winning by diverging in a non-negative direction.
Solving the given initial credit problem and computing the Pareto limit will require more work,
involving reasoning \`a la \citeauthor{rackoff78}'s for the covering problem,
where `very small' vector components which may potentially become negative
are tracked using graph vertices and
the remaining `very large' vector components are guaranteed to remain so
by the small hypercube property of bounding games.
Although the presentation of the latter work takes up most of this section,
we remark that it is relatively uninteresting and essentially follows
the pattern already seen in \citeauthor{brazdil10}'s stepping up
from the arbitrary to the given initial credit problem.

\ifshort\relax\else\subsubsection{Tracking Sets.}
Supposing $\tuple{V, E, d}$ is a multi-weighted game graph,
let $\lambda_i = (4 |V| \cdot \|E\|)^{(3 d)^{4 i}}$ for $i = 1, \ldots, d$.

We write $\+N_\infty$ for the natural numbers extended by $\infty$,
where $n < \infty$, $\infty + n = \infty$, and $\infty - n = \infty$, for all $n \in \+N$.

Inspired by \citet[Section~8]{leroux13}, we say that $T \subseteq \{1, \ldots, d\}$
is a \emph{tracking set} for $\vec{a} \in \+N_\infty^d$ iff
$\vec{a}(i) \geq \lambda_{|T| + 1}$ for all $i \,\in\, \{1, \ldots, d\} \setminus T$.
(We remark that if $|T| = d$ then the quantification over $i$ is empty,
so it is not an issue that $\lambda_{|T| + 1}$ is undefined in that case.)

Suppose $\vec{a} \in \+N_\infty^d$.
Observing that $\{1, \ldots, d\}$ is a tracking set,
and that the class of tracking sets is closed under intersection
since the sequence $\lambda_d, \ldots, \lambda_1$ is decreasing,
we conclude that $\vec{a}$ has a \emph{unique minimal} tracking set~$\mathsf{tr}(\vec{a})$.
We write $\lambda(\vec{a})$ for the abstraction $\vec{a}|_{\mathsf{tr}(\vec{a})} \in \+N_\infty^d$
obtained from $\vec{a}$ by setting to $\infty$ all its components
whose indices are outside of its unique minimal tracking set.

Intuitively, for every $\vec{a} \in \+N_\infty^d$,
its unique minimal tracking set $\mathsf{tr}(\vec{a})$ identifies
the least collection of components of $\vec{a}$ that are much smaller than
the remaining components of~$\vec{a}$,
in the sense that they are separated by the two corresponding consecutive elements of
the increasing sequence $\lambda_1, \ldots, \lambda_d$.
Indeed, we have that $\vec{a}(i) \geq \lambda_{|\mathsf{tr}(\vec{a})| + 1}$
for all $i \not\in \mathsf{tr}(\vec{a})$ since $T(\vec{a})$ is a tracking set,
and that $\vec{a}(i) < \lambda_{|\mathsf{tr}(\vec{a})|}$
for all $i \in \mathsf{tr}(\vec{a})$ since $\mathsf{tr}(\vec{a})$ is minimal.
(We remark that if $|\mathsf{tr}(\vec{a})| = 0$ then the latter quantification over $i$ is empty,
so it is not an issue that $\lambda_{|\mathsf{tr}(\vec{a})|}$ is undefined in that case.)

We have called those sets `tracking' because,
in some constructions to follow shortly,
we shall use them to identify vector components
whose values will be exactly tracked in vertices of game graphs.

For $T \in \{1, \ldots, d\}$, let $\+N_\infty^d|_T$ consist of all
$\vec{a} \in \+N_\infty^d$ such that $\mathsf{tr}(\vec{a}) = T$ and $\lambda(\vec{a}) = \vec{a}$.
Informally, it is the collection of all vectors whose minimal set of `very small' components is $T$
and which have been abstracted by setting all their other components to~$\infty$.
By what we observed above, the cardinality of $\+N_\infty^d|_T$ is at most
$\left(\lambda_{|T|}\right)^{|T|}$.
(We remark that if $T$ is the empty set then this is~$1$.)

\subsubsection{Tracking Lossy Game Graphs.}
We now define $\mathsf{TrLo}_T\!\tuple{V, E, d}$ as game graphs
which are obtained from the game graph $\tuple{V, E, d}$
by extending the vertices so that they can exactly track
vector components at indices in the set $T \in \{1, \ldots, d\}$ as long as
they are `much smaller' than the remaining ones,
and by inserting `lossy' self-loops that enable Player~1 to prevent
the latter vector components from becoming unboundedly large.
These constructions will give us the means to reduce solving energy games to solving bounding games:
intuitively, since bounding games have the small hypercube property (cf.\ \autoref{lem-shp}),
Player~1 in an energy game can treat `sufficiently large' vector components as if
she was playing a bounding game, but she has to be careful with keeping the others non-negative;
and she should be given power artificially for preventing Player~2 from
winning the bounding game by making values unboundedly large
since that would not constitute a win for him in the energy game.

The definition is recursive, where we assume that $\mathsf{TrLo}_T\!\tuple{V, E, d}$ game graphs
have been defined for all $T$ of smaller cardinality (if any).
We write $\overline{T}$ for $\{1, \ldots, d\} \setminus T$.
\begin{itemize}

\item
The dimension is $\left|\overline{T}\right| = d - |T|$,
where we regard edge weights as vectors in~$\+Z^{\overline{T}}$.

\item
The set of Player~1 vertices is $\left(V_1 \times \+N_\infty^d|_T\right) \uplus \{v_\top, v_\bot\}$,
and the set of Player~2 vertices is $V_2 \times \+N_\infty^d|_T$.

\item
For every $\tuple{v, \vec{u}, v'} \in E$ and $\vec{a} \in \+N_\infty^d|_T$, there are four cases:
\begin{description}[style=unboxed]

\item[$\lambda(\vec{a} + \vec{u})$ is in $\+N_\infty^d|_T$]
there is an edge $\tuple{\tuple{v, \vec{a}}, \vec{u}|_{\overline{T}}, \tuple{v', \vec{a} + \vec{u}}}$;

\item[$\lambda(\vec{a} + \vec{u})$ is in $\+N_\infty^d|_{T'}$ for a strict subset $T'$ of $T$ and
      Player~1 wins $\Gamma\!\left(\mathsf{TrLo}_{T'}\!\tuple{V, E, d}\right)$
      from $\tuple{v', \lambda(\vec{a} + \vec{u})}$]
there is an edge $\tuple{\tuple{v, \vec{a}}, \vec{u}|_{\overline{T}}, v_\top}$;

\item[$\lambda(\vec{a} + \vec{u})$ is in $\+N_\infty^d|_{T'}$ for a strict subset $T'$ of $T$ and
      Player~1 loses $\Gamma\!\left(\mathsf{TrLo}_{T'}\!\tuple{V, E, d}\right)$
      from $\tuple{v', \lambda(\vec{a} + \vec{u})}$]
there is an edge $\tuple{\tuple{v, \vec{a}}, \vec{u}|_{\overline{T}}, v_\bot}$;

\item[$\vec{a} + \vec{u}$ has a negative component]
there is an edge $\tuple{\tuple{v, \vec{a}}, \vec{u}|_{\overline{T}}, v_\bot}$.
\end{description}

\item
Every Player~1 vertex of $\mathsf{TrLo}_T\!\tuple{V, E, d}$, for every $i \in \overline{T}$,
has a self-loop labelled by the negative unit vector $-\vec{e}_i$.

\item
The only other edge of $\mathsf{TrLo}_T\!\tuple{V, E, d}$ is a self-loop on vertex $v_\top$
labelled by the $\left|\overline{T}\right|$-dimensional zero vector.
\end{itemize}

The conditions that define the four cases above have the property that
always exactly one of them is satisfied because
the $\Gamma\!\left(\mathsf{TrLo}_{T'}\!\tuple{V, E, d}\right)$ bounding game is determined,
cf.\ \autoref{sec-fcycle}.

When $T = \{1, \ldots, d\}$, according to the definition above,
$\mathsf{TrLo}_T\!\tuple{V, E, d}$ has dimension~$0$,
so it fails the assumption from\autoref{sub:wgames} that
the maximum norm of its edge weights is positive.
In that case also, its vertex $v_\bot$ fails the assumption from the same section that
every vertex has at least one outgoing edge.
Nevertheless, $\mathsf{TrLo}_{\{1, \ldots, d\}}\!\tuple{V, E, d}$ is otherwise well defined,
and we shall regard $\Gamma\!\left(\mathsf{TrLo}_{\{1, \ldots, d\}}\!\tuple{V, E, d}\right)$ as
a reachability game in which the goal of Player~1 is to avoid the vertex $v_\bot$ forever
and the goal of Player~2 is to reach it.
(We remark that in this game graph the vertex $v_\top$ has a self-loop
and is hence winning for Player~1 in the reachability game.)

At the other extreme, when $T$ is empty, $\+N_\infty^d|_T$ is the singleton set
consisting of the vector with $\infty$ in every component.
Moreover, the vertices $v_\top$ and $v_\bot$ are not reachable from other vertices,
and thus can be removed from the game graph.
Therefore, $\mathsf{TrLo}_\emptyset\!\tuple{V, E, d}$ is essentially
$\tuple{V, E, d}$ extended by the lossiness,
i.e.\ the self-loops at every Player~1 vertex and with every negative unit weight.

\subsubsection{Solving Energy Games Using Bounding Games.}
We now show that solving an energy game is reducible to
solving a bounding game on a tracking and lossy game graph as defined above,
where the set of `very small' vector components to track is determined by the initial credit vector.

\begin{theorem}
\label{th:energy.bounding}
For every $p \in \{1, 2\}$, initial credit $\vec{b} \in \+N^d$, and vertex~$v$, we have that
Player~$p$ wins energy game $\Delta_{\vec{b}}\!\tuple{V, E, d}$ from $v$ if Player~$p$ wins
bounding game $\Gamma\!\left(\mathsf{TrLo}_{\mathsf{tr}(\vec{b})}\!\tuple{V, E, d}\right)$
from $\tuple{v, \lambda(\vec{b})}$.
\end{theorem}

\begin{proof}
The proof is inductive, where the hypothesis is that the statement holds
for every smaller cardinality of $\mathsf{tr}(\vec{b})$.

We first handle Player~1,
so suppose she has a winning strategy $\sigma$
in $\Gamma\!\left(\mathsf{TrLo}_T\!\tuple{V, E, d}\right)$
from $\tuple{v, \lambda(\vec{b})}$,
where $T$ is the unique minimal tracking set of vector~$\vec{b}$.
Intuitively, we shall obtain a winning strategy for Player~1 in the energy game
by playing according to $\sigma$ as long as the vector components
that have not been abstracted away in the initial credit $\vec{b}$ remain `very small',
and by switching to a winning strategy provided from the inductive hypothesis
as soon as one or more of those vector components become `too large'.
The tracking of the former vector components in the bounding game
will ensure that they remain non-negative,
and the small hypercube property of bounding games
will ensure that the abstracted vector components remain sufficiently large.
All of the lossy self-loops that are performed by $\sigma$ can be skipped
since doing so only results in larger values.

When $|T| < d$, the game is bounding of dimension $d - |T|$,
so by \autoref{lem-shp} we can assume that $\sigma$ stays within a small hypercube.
If also $|T| > 0$, this ensures that the norm of the total weight
of every prefix of every consistent play is at most
\begin{multline*}
\left(4 \left(|V| \cdot \left|\+N_\infty^d|_T\right| + 2\right) \cdot \|E\|\right)
^{2 (d - |T| + 2)^3} \leq \\
\left(4 \left(|V| \cdot \left(\lambda_{|T|}\right)^{|T|} + 2\right) \cdot \|E\|\right)
^{2 (d - |T| + 2)^3} < \\
\left(4 \left(|V| \cdot (4 |V| \cdot \|E\|)^{d (3 d)^{4 |T|}} + 2\right) \cdot \|E\|\right)
^{2 (d - |T| + 2)^3} < \\
\left(8 |V| \cdot \|E\| \cdot (4 |V| \cdot \|E\|)^{d (3 d)^{4 |T|}}\right)^{2 (d - |T| + 2)^3} \leq \\
\left((4 |V| \cdot \|E\|)^{d (3 d)^{4 |T|} + 1.5}\right)^{2 (d - |T| + 2)^3} < \\
\left((4 |V| \cdot \|E\|)^{1.5 d (3 d)^{4 |T|}}\right)^{2 (d - |T| + 2)^3} \leq \\
(4 |V| \cdot \|E\|)^{(3 d)^{4 |T| + 1} (2 d)^3} <
\lambda_{|T| + 1} - \lambda_{|T|}\;.
\end{multline*}
If $|T| = 0$, the bound simplifies to
\[(4 |V| \cdot \|E\|)^{2 (d + 2)^3} < (4 |V| \cdot \|E\|)^{(3 d)^4} = \lambda_1\;.\]

Let $\widehat{\sigma}$ be the following strategy of Player~1
in the energy game $\Delta_{\vec{b}}\!\tuple{V, E, d}$ from the vertex~$v$.
\begin{itemize}

\item
Strategy $\widehat{\sigma}$ makes the same choices as the strategy $\sigma$
as long as playing by the latter does not reach the vertex~$v_\top$,
except that it skips any self-loops with negative unit weights
that were added to Player~1 vertices in the definition of $\mathsf{TrLo}_T\!\tuple{V, E, d}$.
This is well defined since every edge in $\mathsf{TrLo}_T\!\tuple{V, E, d}$ of the form
$\tuple{\tuple{v', \vec{a}}, \vec{u}|_{\overline{T}}, \tuple{v'', \vec{a} + \vec{u}}}$
determines the corresponding edge $\tuple{v', \vec{u}, v''}$ in $\tuple{V, E, d}$,
and since $\sigma$ cannot keep choosing the lossy self-loops
consecutively forever because it is winning.
Note also that, for the same reason, playing by $\sigma$ cannot reach the vertex~$v_\bot$.

\item
Observe that the portion of $\widehat{\sigma}$ defined so far has the following property.
For every vertex $\tuple{v', \vec{a}}$ and total weight $\vec{w} \in \+Z^{\overline{T}}$
which are reached by playing according to $\sigma$
in $\mathsf{TrLo}_T\!\tuple{V, E, d}$ from $\tuple{v, \lambda(\vec{b})}$,
we have that $v'$ is the corresponding vertex reached by
playing according to $\widehat{\sigma}$ in $\tuple{V, E, d}$ from~$v$,
and that the corresponding energy vector $\vec{b}'$ satisfies:
\begin{itemize}
\item
$\vec{b}'|_T = \vec{a}|_T$;
\item
for every $i \in \overline{T}$, we have that $\vec{b}'(i) \geq \vec{b}(i) + \vec{w}(i)$,
where the difference between the two sides is exactly the number of $-\vec{e}_i$ added self-loops
that have been taken by~$\sigma$.
\end{itemize}
The latter inequality, together with the small hypercube property of $\sigma$ and
the fact that $\vec{b}(i) \geq \lambda_{|T| + 1}$ since $T = \mathsf{tr}(\vec{b})$,
implies that $\vec{b}'(i) \geq 0$.

\item
If and as soon as playing by $\sigma$ reaches the vertex~$v_\top$,
which can only be by and edge in $\mathsf{TrLo}_T\!\tuple{V, E, d}$ of the form
$\tuple{\tuple{v', \vec{a}}, \vec{u}|_{\overline{T}}, v_\top}$ where
there exists an edge $\tuple{v', \vec{u}, v''}$ in $\tuple{V, E, d}$ such that
$\lambda(\vec{a} + \vec{u})$ is in $\+N_\infty^d|_{T'}$ for a strict subset $T'$ of $T$ and
Player~1 wins $\Gamma\!\left(\mathsf{TrLo}_{T'}\!\tuple{V, E, d}\right)$
from $\tuple{v'', \lambda(\vec{a} + \vec{u})}$,
then $\widehat{\sigma}$ follows such an edge $\tuple{v', \vec{u}, v''}$
which results in some current energy vector $\vec{b}''$
and subsequently plays according to some winning strategy of Player~1
in the energy game $\Delta_{\vec{b}''}\!\tuple{V, E, d}$ from the vertex~$v''$.
Such a strategy exists by the inductive hypothesis
because $\lambda(\vec{b}'') = \lambda(\vec{a} + \vec{u})$,
which follows from the observations above and,
in case $|T| < d$, from the small hypercube property of $\sigma$
(with recalling that now also $|T| > 0$).
\end{itemize}

It remans to handle Player~2,
so suppose he has a winning strategy $\tau$
in $\Gamma\!\left(\mathsf{TrLo}_T\!\tuple{V, E, d}\right)$
from $\tuple{v, \lambda(\vec{b})}$,
where $T$ is the unique minimal tracking set of vector~$\vec{b}$.
Intuitively, we shall obtain a winning strategy for Player~2 in the energy game
by playing according to $\tau$ as long as the vector components
that have not been abstracted away in the initial credit $\vec{b}$
remain non-negative and `very small'.
If that continues forever,
Player~2 will win regardless of how large the other components of $\vec{b}$ are,
because the added lossy self-loops in the bounding game ensure that
$\tau$ makes the total weight diverge in some negative direction.
Otherwise, Player~2 will win either immediately or
by switching to a strategy provided from the inductive hypothesis.

When $|T| < d$, we have that \autoref{l:Exp} \autoref{lem-p1}
apply to the bounding game of dimension $d - |T|$
and its first-cyle variant $G\!\left(\mathsf{TrLo}_T\!\tuple{V, E, d}\right)$,
so we can assume that $\tau$ mimics a winning strategy of Player~2
in $G\!\left(\mathsf{TrLo}_T\!\tuple{V, E, d}\right)$
from $\tuple{v, \lambda(\vec{b})}$ as in \autoref{sub-player2}.
Since every Player~1 vertex was augmented with every negative unit self-loop,
the latter strategy chooses only perfect half-spaces that are disjoint from
the non-negative orthant $\cone{\vec{e}_i \,:\, i \in \overline{T}}$.
Hence, from the proof of \autoref{l:Exp},
for every infinite play $\rho$ consistent with~$\tau$,
letting $C_1, C_2, \ldots$ be its cycle decomposition,
there exist an open half-subspace $H$ and a positive integer $N$ such that:
\begin{itemize}
\item
$H$ is disjoint from the non-negative orthant;
\item
for each $n \geq N$, the total weight $w(C_n)$ belongs to
$\overline{H}$ the topological closure of~$H$;
\item
the set of all distances of $w(C_N) + \cdots + w(C_n)$ from the boundary of $H$ is unbounded.
\end{itemize}

Let $\widehat{\tau}$ be the following strategy of Player~2
in the energy game $\Delta_{\vec{b}}\!\tuple{V, E, d}$ from the vertex~$v$.
\begin{itemize}

\item
Strategy $\widehat{\tau}$ makes the same choices as the strategy $\tau$
as long as playing by the latter does not reach the vertex~$v_\bot$.
This is well defined since every edge in $\mathsf{TrLo}_T\!\tuple{V, E, d}$ of the form
$\tuple{\tuple{v', \vec{a}}, \vec{u}|_{\overline{T}}, \tuple{v'', \vec{a} + \vec{u}}}$
determines the corresponding edge $\tuple{v', \vec{u}, v''}$ in $\tuple{V, E, d}$.
Note that the lossy self-loops that were added
in the definition of $\mathsf{TrLo}_T\!\tuple{V, E, d}$
do not arise here since they are at Player~1 vertices.
Note also that, since $\tau$ is winning, playing by it cannot reach the vertex~$v_\top$.

\item
Observe that the portion of $\widehat{\tau}$ defined so far
can produce an infinite play $\widehat{\rho}$ only when $|T| < d$,
because otherwise $\tau$ would admit an infinite play that avoids $v_\bot$ forever
and would thus not be winning in the reachability game
$\Gamma\!\left(\mathsf{TrLo}_T\!\tuple{V, E, d}\right)$.
Every such $\widehat{\rho}$ is then winning for Player~2 by the analysis above,
regardless of how large the initial credit vector components $\vec{b}|_{\overline{T}}$ are.

\item
One way in which playing by $\tau$ can reach the vertex~$v_\bot$
is by and edge in $\mathsf{TrLo}_T\!\tuple{V, E, d}$ of the form
$\tuple{\tuple{v', \vec{a}}, \vec{u}|_{\overline{T}}, v_\bot}$ where
there exists an edge $\tuple{v', \vec{u}, v''}$ in $\tuple{V, E, d}$ such that
$\lambda(\vec{a} + \vec{u})$ is in $\+N_\infty^d|_{T'}$ for a strict subset $T'$ of $T$ and
Player~2 wins $\Gamma\!\left(\mathsf{TrLo}_{T'}\!\tuple{V, E, d}\right)$
from $\tuple{v'', \lambda(\vec{a} + \vec{u})}$.
Then $\widehat{\tau}$ follows such an edge $\tuple{v', \vec{u}, v''}$
which results in some current energy vector $\vec{b}''$
that coincides with $\vec{a} + \vec{u}$ on all the components indexed by~$T$.
Letting $\vec{b}^\dag$ be some vector such that
$\vec{b}^\dag \geq \vec{b''}$ and $\lambda(\vec{b}^\dag) = \lambda(\vec{a} + \vec{u})$,
subsequently $\widehat{\tau}$ plays according to some winning strategy of Player~2
in the energy game $\Delta_{\vec{b}^\dag}\!\tuple{V, E, d}$ from the vertex~$v''$,
which exists by the inductive hypothesis.

\item
The other way in which playing by $\tau$ can reach the vertex~$v_\bot$
is by and edge in $\mathsf{TrLo}_T\!\tuple{V, E, d}$ of the form
$\tuple{\tuple{v', \vec{a}}, \vec{u}|_{\overline{T}}, v_\bot}$ where
there exists an edge $\tuple{v', \vec{u}, v''}$ in $\tuple{V, E, d}$ such that
$\vec{a} + \vec{u}$ has a negative component.
Then $\widehat{\tau}$ follows such an edge $\tuple{v', \vec{u}, v''}$
and at that point wins the energy game for Player~2, so subsequently may play arbitrarily.
\qedhere
\end{itemize}
\end{proof}

\subsubsection{Complexity Upper Bounds.}
\autoref{th:energy.bounding} not only tells us
how to solve energy games for given initial credits using bounding games,
but it also has the following easy corollary saying that
energy games with arbitrary initial credits can be solved
just using the bounding game on the lossy game graph.

\begin{corollary}
\label{cor:energy.arbitrary}
The following are equivalent:
\begin{enumerate}
\item
There exists $\vec{b} \in \+N^d$ such that
Player~1 wins energy game $\Delta_{\vec{b}}\!\tuple{V, E, d}$
from vertex~$v$.
\item
Player~1 wins energy game $\Delta_{\tuple{\lambda_1, \ldots, \lambda_1}}\!\tuple{V, E, d}$
from vertex~$v$.
\item
Player~1 wins bounding game $\Gamma\!\left(\mathsf{TrLo}_{\emptyset}\!\tuple{V, E, d}\right)$
from vertex \\ $\tuple{v, \tuple{\infty, \ldots, \infty}}$.
\end{enumerate}
\end{corollary}

\begin{example}
  By \autoref{cor:energy.arbitrary}, because she was winning the energy game
  of \autoref{fig-wgame} with initial credit $(2,2)$, Player~1 is now
  winning the bounding game played on the lossy \ifshort version of
  the multi-weighted game graph of \autoref{fig-wgame}\else
  multi-weighted game graph of \autoref{fig-lgame}\fi.
\end{example}

\begin{figure}[tbp]
  \centering
  \begin{tikzpicture}[auto,on grid,node distance=3cm]
  \node[triangle](C){$v_0$};
  \node[square,left=of C](L){$v_L$};
  \node[square,right=of C](R){$v_R$};
  \path[->,every node/.style={font=\footnotesize}]
  (C) edge[swap] node{$\tuple{0,0}$} (L)
  (C) edge node{$\tuple{0,0}$} (R)
  (L) edge[bend left=40,draw=black!40!red] node{$\tuple{-2,2}$} (C)
  (L) edge[bend right=26,draw=black!40!orange,swap] node{$\tuple{-1,3}$} (C)
  (R) edge[bend right=40,draw=black!40!blue,swap] node{$\tuple{2,-1}$} (C)
  (R) edge[bend left=26,draw=black!40!cyan] node{$\tuple{3,-3}$} (C)
  (C) edge[loop above] node {$\tuple{-1,0}$} ()
  (C) edge[loop below] node {$\tuple{0,-1}$} ();;
  \end{tikzpicture}
  \caption{A simplified result of the $\mathsf{TrLo}_\emptyset$ construction
           on the game graph of \autoref{fig-wgame}.\label{fig-lgame}}
\end{figure}

It is also straightforward to obtain the next corollary.
Its first part follows from \autoref{cor:energy.arbitrary} and \autoref{cor:haupt}.
Its second and third parts follow from \autoref{th:energy.bounding},
since $O(|V| \cdot \|E\|)^{2^{O(d \cdot \log d)}}$ deterministic time
suffices for solving the games $\Gamma\!\left(\mathsf{TrLo}_{T}\!\tuple{V, E, d}\right)$
for all $T \in \{1, \ldots, d\}$,
and all components of all vectors in the Pareto limit are at most
$\lambda_d = (4 |V| \cdot \|E\|)^{(3 d)^{4 d}}$.

\begin{corollary}
\label{cor:energy.upper}
For energy games on multi-weighted game graphs $\tuple{V, E, d}$:
\begin{enumerate}
\item
the arbitrary initial credit problem is solvable in
$O(|V| \cdot \|E\|)^{O(d^4)}$ deterministic time;
\item
the given initial credit problem is solvable in
$O(|V| \cdot \|E\|)^{2^{O(d \cdot \log d)}}$ deterministic time;
\item
the Pareto limit, consisting of all pointwise minimal winning initial credits, is computable in
$O(|V| \cdot \|E\|)^{2^{O(d \cdot \log d)}}$ deterministic time.
\end{enumerate}
\end{corollary}

The upper bound for the given initial credit problem
matches the \EXP[2] lower bound from \citep{courtois14},
and encompasses \citeauthor{chaloupka13}'s \P\ upper bound
in dimension $d=2$ with unit updates, i.e.\ with $\|E\|=1$.
Because the given initial credit problem is \EXP-hard
for fixed dimension $d\geq 4$~\citep{courtois14},
the upper bound in terms of $\|E\|$ cannot be improved.

\fi


\section{Multi-Dimensional Mean-Payoff Games}
\label{sub:meanpayoff}
This section summarises the technique for solving multi-dimensional
mean-payoff games proposed by \citet{chatterjee13}, which relies
on \emph{open half-spaces}.  The rest of the paper does not rely
formally on this section and it may be omitted by a reader eager to
get on with our new `perfect half-spaces' technique for solving
multi-dimensional bounding games.  We believe, however, that starting
here helps put our work in context, appreciate similarities and
differences between the two techniques, and understand the conceptual
and some of the technical challenges we had to overcome.

\subsection{Multi-Dimensional Mean-Payoff Games}
Given a play $\rho$ over a multi-weighted game graph $\tuple{V,E,d}$,
we define its \emph{long-term average} in $\+Q^d$ as
$\avg{\rho}\eqdef\lim_{n\to\infty}\frac{w(\rho|_n)}{n}$.  We say that
$\rho$ is winning for Player~1 in the \emph{mean-payoff game}
$M\!\tuple{V,E,d}$ if $\avg{\rho}\geq\vec 0$.  Otherwise, i.e.\ if
there is a coordinate $1\leq i\leq d$ such that $\avg{\rho}(i)<0$, the
play $\rho$ is winning for Player~2.  As shown by
\citet{chatterjee10}, determining the winner in multi-dimensional
mean-payoff games is \coNP-complete, and in pseudo-polynomial time
when the dimension is
fixed~\citep[\theoremautorefname~1]{chatterjee13}.

\subsection{Energy Versus Mean-Payoff}
In a one-dimensional ar\-bit\-ra\-ry-initial-credit energy game, the
goal of Player~1 is to keep the energy level bounded from below.   
It is folklore that Player~1 has a winning strategy in such a game if
and only if she has a strategy in the mean-payoff game on the same
game graph that guarantees a non-negative long-term average.

\subsubsection{Infinite Memory Strategies for Player~1.}
This relationship between energy games and mean-payoff games does not
generalise to multi-dimensional games.  We illustrate this on the
example of a 2-dimensional game graph from \autoref{fig:mean}.  In
\autoref{ex:mean} we have argued that Player~2 has a winning strategy
in the bounding game (and hence also in the arbitrary-initial-credit
energy game).  On the other hand, we argue that Player~1 has a
strategy to guarantee that the long-term average in both dimensions is
non-negative.  Indeed, consider a strategy in which in stage~$m$---for
all $m = 1, 2, 3, \dots$---Player~1 performs one of the self-loops
$m$~times, then she moves to the other vertex where she performs the
other self-loop $m$~times, and then finally returns to the starting
vertex.  After $m$ stages, the energy level in both dimensions is~$-m$
and the number of steps performed is~$\Theta(m^2)$, hence the
long-term average in the infinite play is~0 in both dimensions,
because $\lim_{m \to \infty} -\frac{m}{m^2} = 0$.

Note that this strategy for Player~1 in the game graph of
\autoref{fig:mean} is infinite-memory, since the actions depend on the
stage $m$.  Multi-dimensional mean-payoff games might require infinite
memory in order to be won, as shown by
\citet[\lemmaautorefname~4]{chatterjee10}---their proof can be used to
show that the game of \autoref{fig:mean} actually requires infinite
memory.

\subsubsection{Finite Memory Strategies for Player~1.}
In the multi-dimensional case, there is nevertheless a strong relation
between energy and mean-payoff games.  Call a strategy $\sigma$
\emph{finite memory} if there exists an equivalence relation $\sim$
with finite index over $(V\times\+Z^d)^+$ such that, whenever
$\pi\sim\pi'$ for some non-empty paths $\pi$ and $\pi'$ in the domain
of $\sigma$, then $\sigma(\pi)=\sigma(\pi')$ (such strategies are
typically described using \emph{Moore machines}).
\begin{fact}[\citet{chatterjee10}]
  Let $\tuple{V,E,d}$ be a multi-weighted game graph.  There exists an
  initial credit $\vec b$ such that Player~1 wins the energy game
  $\Delta_{\vec b}\tuple{V,E,d}$ if and only if Player~1 has a finite
  memory winning strategy in the mean-payoff game $M\!\tuple{V,E,d}$.
\end{fact}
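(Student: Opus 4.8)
The plan is to prove the two implications separately, routing both through the bounding games of \autoref{sub:bounding}.

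For the direction ``a finite-memory winning strategy exists $\Rightarrow$ some initial credit works'', I would fix a finite-memory winning strategy $\sigma$ for Player~1 in $M\tuple{V,E,d}$ from the initial vertex $v_0$, presented by a Moore machine with a finite state set $Q$, and pass to the product game graph $G_\sigma$ on vertex set $V\times Q$: at a Player~1 vertex the only outgoing edge is the one prescribed by $\sigma$, whereas at a Player~2 vertex every edge of $E$ survives (with the memory state updated accordingly). Plays from $(v_0,q_0)$ in $G_\sigma$ are exactly the plays from $v_0$ consistent with $\sigma$, carrying the same total weights. The key claim is that every cycle $C$ of $G_\sigma$ reachable from $(v_0,q_0)$ has $w(C)\geq\vec 0$: otherwise Player~2 could reach $C$ along some path $P$ and then loop around $C$ forever, producing the eventually periodic play $\rho=P\cdot C^\omega$ with $\avg{\rho}=w(C)/|C|$ negative on some coordinate, contradicting that $\sigma$ wins. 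Decomposing each finite prefix $\rho|_n$ of a $\sigma$-consistent play into simple cycles (each non-negative by the claim) plus a simple residual path of at most $|V|\cdot|Q|-1$ edges then gives $w(\rho|_n)\geq -(|V|\cdot|Q|\cdot\|E\|)\cdot\vec 1$ for every $n$, so $\sigma$ is winning in $\Delta_{\vec b}\tuple{V,E,d}$ from $v_0$ with $\vec b\eqdef(|V|\cdot|Q|\cdot\|E\|)\cdot\vec 1\in\+N^d$.

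For the converse, assume Player~1 wins $\Delta_{\vec b}\tuple{V,E,d}$ from $v_0$ for some $\vec b$. By the first part of \autoref{th:arbit} she wins the bounding game $\Gamma(\lossy{V,E,d})$, and by \autoref{lem-shp} she has a winning strategy there that keeps $\|\rho\|\leq B$ for $B\eqdef(4|V|\cdot\|E\|)^{2(d+2)^3}$ (the lossy graph has the same vertex count and the same edge norm). Hence $\Gamma(\lossy{V,E,d})$, restricted to configurations whose total weight stays in $\{-B,\dots,B\}^d$, is a finite-arena safety game for Player~1 which she wins; by memoryless determinacy of finite safety games she wins it with a strategy depending only on the current vertex and the clipped total weight, which pulls back to a finite-memory strategy on $\lossy{V,E,d}$. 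Stripping the decrementing self-loops exactly as in the proof of \autoref{th:arbit} — noting that at most $2Bd$ decrements can separate two consecutive non-decrement moves without breaching the bound, so only boundedly many steps are elided and finiteness of the memory is preserved — yields a finite-memory strategy $\sigma'$ on $\tuple{V,E,d}$ winning $\Delta_{B\vec 1}\tuple{V,E,d}$ from $v_0$. Since $w(\rho'|_n)\geq -B\cdot\vec 1$ for every prefix of every $\sigma'$-consistent play $\rho'$, the long-term average of $\rho'$ is coordinatewise non-negative, so $\sigma'$ is a finite-memory winning strategy for Player~1 in $M\tuple{V,E,d}$ as well. (Alternatively, one could bypass the bounding game and read off a finite-memory energy strategy from the finiteness of self-covering strategy trees of \citet{brazdil10}.)

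I expect the second implication to be the main obstacle: turning an \emph{a priori} infinite-memory winning strategy of the bounding game into a genuinely finite-memory strategy that lives on the original arena. \autoref{lem-shp} is exactly what makes this possible, by collapsing the bounding game to a finite safety game; what then requires care is the decrement-stripping step (which must be checked to keep the memory finite, as sketched above) together with the elementary observation that a total weight bounded from below has a coordinatewise non-negative long-term average, so that Player~1 really meets the mean-payoff objective.
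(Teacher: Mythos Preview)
The paper does not prove this statement: it is recorded as a Fact attributed to \citet{chatterjee10}, with no argument supplied. There is therefore no proof in the paper against which to compare your proposal.

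For what it is worth, your argument is essentially correct. The first direction is the standard product-with-memory construction, and your cycle-non-negativity claim goes through because in $G_\sigma$ Player~1 vertices have a single outgoing edge, so any reachable cycle can indeed be pumped by Player~2. The second direction is a nice internal application of the paper's own machinery (\autoref{th:arbit} and \autoref{lem-shp}): collapsing the bounding game on $\lossy{V,E,d}$ to a finite safety arena yields a positional---hence finite-memory---strategy, and the decrement-stripping step preserves finite memory for the reason you state (the bounded-norm invariant caps the number of consecutive decrements). This is not circular, since the Fact is not invoked anywhere in the proofs of \autoref{th:arbit}, \autoref{lem-shp}, or the \autoref{sec-fcycle} material they rest on. The one point that deserves a remark is the final implication: the paper writes $\avg{\rho}$ as a plain limit, which need not exist in general, so your step from $w(\rho'|_n)\geq -B\cdot\vec 1$ for all $n$ to $\avg{\rho'}\geq\vec 0$ should be read under the usual $\liminf$ interpretation of the mean-payoff objective.
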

Hence our complexity bounds in \autoref{cor:energy.upper} on
multi-dimensional energy games also yield a pseudo-polynomial time
algorithm to find a winning finite-memory strategy for Player~1 in a
given fixed-dimensional mean-payoff game.

\subsection{The Open Half-Space Technique for Mean-Payoff Games}
For technical convenience, we follow \citeauthor{chatterjee13} in
considering mean-payoff games on lossy game graphs.
In this context, the goal of Player~1 is to achieve a long-term
average of~0 in all dimensions, and the goal of Player~2 is to achieve
a negative long-term average in at least one dimension.

\subsubsection{Winning Strategies for Player~2.}
The first key observation that underpins the solution of (lossy)
multi-dimensional mean-payoff games by \citeauthor{chatterjee13} is
the following sufficient condition for the existence of a winning
strategy for Player~2 from some vertex in the game graph:
there is a vertex~$v_0$, an open half-space 
$H \subseteq {\mathbb R}^d$ and a strategy for Player~2 that  
guarantees all simple cycles formed along a play from $v_0$ to be
in~$H$.
One can then argue that if Player~2 uses such a strategy
indefinitely then the norms of the energy level vectors grow linearly
in the number of steps performed, and hence the long-term average is
non-zero in at least one dimension.

Every open half-space can be determined by a
non-zero vector~$\vec{n} \in \mathbb{R}^d$ that is normal to the
hyperplane on the boundary of the half-space:
$$H_{\vec{n}} = \{\vec{v} \in \mathbb{R}^d \: : \: \vec{n} \cdot
\vec{v} < 0\}\;.$$ \Citet[\lemmaautorefname~1]{chatterjee13} crucially
point out that for every vector~$\vec{n} \in \mathbb{R}^d$, one can
check whether Player~2 has a strategy that guarantees all simple
cycles formed to be in~$H_{\vec{n}}$ from $v_0$ by solving a
one-dimensional mean-payoff game on the multi-weighted game graph with
every weight $\vec u$ replaced by the dot-product~$\vec{n} \cdot \vec
u$.

\subsubsection{Winning Strategies for Player~1.}
The second key insight of \citeauthor{chatterjee13} is that the above
sufficient condition for the existence of a winning strategy for
Player~2 in a lossy multi-dimensional mean-payoff game is 
necessary.  
Indeed, by (positional) determinacy of mean-payoff
games~\cite{ehrenfeucht79}, it follows that, if the sufficient
condition described above does not hold for any open half-space
$H_{\vec n}$ and any initial vertex $v_0$, then for all non-zero
vectors~$\vec{n}$ and all vertices $v_0$, Player~1 has a (positional)
strategy to block simple cycles in~$H_{\vec{n}}$ along any play from
$v_0$, or in other words to force all simple cycles formed to be
in $$\mathbb{R}^d \setminus H_{\vec{n}} = \overline{H_{-\vec{n}}} =
\{\vec{v} \in \mathbb{R}^d \: : \: \vec{n} \cdot \vec{v} \geq 0\}\;.$$

In such a case, \citet[\lemmaautorefname~2]{chatterjee13} show that
such strategies of Player~1, which force simple cycles formed to be in
any closed half-space, can be carefully combined to ensure that the
long-term average is~0 in every dimension.  The main idea in the
construction of the strategy for Player~1 is to proceed in stages
$m=1, 2, 3, \dots$, to monitor the energy-level vector at the
beginning of stage~$m$ of the game, say $\vec{g}_m$, and to
`counteract' its further growth in the direction of~$\vec{g}_m$
throughout stage~$m$ by using the strategy that blocks simple 
cycles in the open half-space $H_{\vec{g}_m}$, i.e., that forces all
the formed simple cycles to be in the closed half-space
$\overline{H_{-\vec{g}_m}}$. 
In the winning strategy we described for Player~1 for the mean-payoff
game over the graph of \autoref{fig:mean}, Player~1 can avoid cycles
in $H_{(-1,1)}$ by playing the self-loop on $v_L$, and she can avoid
cycles in $H_{(1,-1)}$ by playing the self-loop on $v_R$.

Moving from one stage to another, and hence switching between such
counteracting strategies to force simple cycles in different
half-spaces, cannot be done too often because as a result of switching
from one strategy to another a bounded number of unfavourable simple
cycles may be formed.  This is the case in our example, since
switching between $v_L$ and $v_R$ closes a cycle with effect $(-1,-1)$
resulting in a drift away from the non-negative orthant.  

The strategy for Player~1 proposed by \citeauthor{chatterjee13}
overcomes this complication by 
increasing the number of steps made in every stage;
in particular, they proposed making $s(m)\eqdef m$ steps in stage~$m$
before proceeding to stage~$m+1$.
The purpose is to make the drift grow slower than the number of steps
in the play. 
This, as can be deduced from their 
analysis, gives a bound of $O(n^{3/4})$ for the norm of the
energy-level vector after~$n$ steps, and hence the long-term average
is~0 in all dimensions because 
$\lim_{n \to \infty} \frac{n^{3/4}}{n}=0$.  
One may observe that more 
generally, if we set $s(m)\eqdef m^\varepsilon$, for any
$\varepsilon>0$, then the norm of the energy-level vector after~$n$
steps is $O(n^{1/2 + \varepsilon/2(1+\varepsilon)})$.  Hence, the
best upper bounds on the norm of the energy-level vectors after $m$
steps that can be guaranteed by Player~1---when using a strategy
similar to that constructed by \citeauthor{chatterjee13}---are in
$\omega(\sqrt{n})$.  Let us point out that such strategies require
infinite memory because they need to `keep the count' of the stage
they are in and of the number of steps they need to perform in the
current stage, both of which are unbounded.

\section{Perfect Half-Spaces}
\label{sec-pphs}
We recall in this section the definition of subsets of $\+Q^d$
called \emph{perfect half-spaces}\ifshort\relax\else, which can also
be characterised as the \emph{maximal} salient blunt cones in
$\+Q^d$\fi.  They will be used next in \autoref{sec-fcycle} to define
a condition for Player~2 to win bounding games,
which relies on Player~2's ability to force
cycles 
inside perfect
half-spaces.  This can be understood as a generalisation
of \citeauthor{chatterjee13}'s approach for solving
multi-dimensional \ifshort\emph{mean-payoff}
games~\citep{chatterjee13}, which as we recall
in \appref{sub:meanpayoff} \else mean-payoff games, which \fi relies
on a similar ability to force cycles inside open half-spaces.
We employ perfect half-spaces in \autoref{sec-fcycle} to colour the
edges in \emph{first-cycle bounding games}, which determine the
winner using both the colours and the weight of the first cycle formed
along a play.


\ifshort\vspace{-.3em}\fi
\subsection{Definitions from Linear Algebra}
Given a subset $\vec A$ of $\+Q^d$, we write $\vsp{\vec A}$ (resp.,
$\cone{\vec A}$) for the \emph{vector space} (resp.,
the \emph{cone}) \emph{generated} by $\vec A$, i.e., the closure of
$\vec A$ under addition and under multiplication 
by all (resp., nonnegative) rationals.
\ifshort A \relax\else\par
Observe that the sufficient condition for existence of a winning
strategy for Player~2 in a lossy multi-dimensional mean-payoff game is
also a sufficient condition for him to have a winning strategy in a
bounding game.  Unlike for multi-dimensional mean-payoff games and as
witnessed with the game on \autoref{fig:mean}, however, this condition
is not necessary.
In order to formulate a new more powerful sufficient condition, we use
instead perfect half-spaces: a \fi \emph{$k$-perfect
half-space} of $\mathbb{Q}^d$, where $k \in \{1, 2, \dots, d\}$, is a
(necessarily disjoint) union $H_d \cup \cdots \cup H_k$ such that:
\begin{itemize}\ifshort\vspace*{-.4em}\fi
\item
  $H_d$ is an open half-space of~$\mathbb{Q}^d$;
\item
  for all $j \in \{k, \dots, d - 1\}$, $H_j \subseteq \mathbb{Q}^d$ 
  is an open half-space of the boundary of~$H_{j + 1}$.\ifshort\vspace*{-.5em}\fi
\end{itemize}
Whenever we write a $k$-perfect half-space in form
$H_d \cup \cdots \cup H_k$, we assume that each $H_j$ is
$j$-dimensional.  We additionally define the $(d + 1)$-perfect
half-space as the empty set; a \emph{partially-perfect half-space} is
then a $k$-perfect half-space for some $k$ in $\{1,\dots,d+1\}$.
A \emph{perfect half-space} is a $1$-perfect
half-space.  \ifshort\else Observe that
a partially-perfect half-space is always a cone, which
is \emph{blunt}, i.e., does not contain $\vec 0$,
and \emph{salient}, i.e., if it contains a vector $\vec v$ then it
does not contain its opposite $-\vec v$.  Moreover, a perfect
half-space is a \emph{maximal} blunt and salient cone.\fi

\ifshort\vspace{-.3em}\fi
\subsection{Generated Perfect Half-Spaces}
In order to pursue effective and parsimonious strategy constructions,
we consider perfect half-spaces generated by particular sets of
vectors, which will correspond to the total weights of simple cycles
in multi-weighted game graphs.  Given a norm $M$ in $\+N$, we say that
an open half-space $H$ is \emph{$M$-generated} if its boundary equals
$\vsp{\vec B}$ for some set $\vec B$ of vectors of norm at most $M$%
.  By extension, a partially-perfect
half-space is \emph{$M$-generated} if each of its open half-spaces is
$M$-generated.
\begin{proposition}
\label{pr:sib}
  Any $k$-dimensional vector space of $\mathbb{Q}^d$ has
  at most $\?L(k)\eqdef 2 (2 M+ 1)^{d (k - 1)}$ open half-spaces that are
  $M$-generated.
\end{proposition}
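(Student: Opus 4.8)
The plan is to disentangle the two independent choices that pin down an $M$-generated open half-space of a fixed $k$-dimensional vector space $W\subseteq\mathbb{Q}^d$: its \emph{boundary}, which is a $(k-1)$-dimensional subspace $U$ of $W$, and the choice of one of the two sides of $U$ within $W$. First I would note that if two open half-spaces of $W$ share the same boundary $U$, writing them as $\{\vec v\in W:\ell(\vec v)<0\}$ and $\{\vec v\in W:\ell'(\vec v)<0\}$ for linear functionals $\ell,\ell'$ on $W$ with $\ker\ell=\ker\ell'=U$, one has $\ell'=c\,\ell$ for some nonzero rational $c$, so the second half-space either coincides with the first (if $c>0$) or is its opposite side (if $c<0$). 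Hence $W$ has exactly twice as many open half-spaces as it has $(k-1)$-dimensional subspaces, and by definition such a half-space is $M$-generated exactly when its boundary $U$ can be written as $\vsp{\vec B}$ for some set $\vec B$ of integer vectors each of norm at most $M$. So it is enough to bound by $(2M+1)^{d(k-1)}$ the number of such subspaces $U$; and since every subspace of $W$ is in particular a subspace of $\mathbb{Q}^d$, I may drop the restriction $U\subseteq W$ and count all $(k-1)$-dimensional subspaces of $\mathbb{Q}^d$ of the form $\vsp{\vec B}$ with $\vec B\subseteq\{-M,\dots,M\}^d$.

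For this count I would use that a spanning set of a $(k-1)$-dimensional space contains a basis of it: given such a $U=\vsp{\vec B}$, there are linearly independent vectors $\vec v_1,\dots,\vec v_{k-1}\in\vec B\subseteq\{-M,\dots,M\}^d$ with $U=\vsp{\vec v_1,\dots,\vec v_{k-1}}$. Therefore $U$ lies in the image of the map that sends a tuple in $\bigl(\{-M,\dots,M\}^d\bigr)^{k-1}$ to the subspace it spans, and since this domain has cardinality $(2M+1)^{d(k-1)}$, there are at most $(2M+1)^{d(k-1)}$ subspaces of the required shape. Multiplying by the factor $2$ for the side gives the announced $\?L(k)=2(2M+1)^{d(k-1)}$; for $k=1$ the estimate is exact, since the two rays of a line $W$ both have boundary $\{\vec 0\}=\vsp{\emptyset}$, whence $\?L(1)=2$.

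I do not expect a genuine obstacle here — the argument reduces to a one-line counting estimate once the boundary/side split is made — but the point that needs care is the \emph{integrality} of the generating vectors. If $\vec B$ were allowed to contain arbitrary rational vectors of norm at most $M$, then every subspace would be $M$-generated, since any nonzero rational vector can be rescaled into the ball of radius $M$, and no finite bound could hold; the proposition must therefore be read with $\vec B$ ranging over integer vectors of norm at most $M$. This is exactly the regime of interest, as in the intended application $\vec B$ will consist of total weights of simple cycles, of norm at most $|V|\cdot\|E\|$ by \autoref{pr:ssw}.
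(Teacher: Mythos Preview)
Your argument is correct. The paper actually states this proposition without proof, presumably regarding it as an immediate counting observation; your write-up supplies exactly the natural argument one has in mind, namely that an $M$-generated open half-space of a $k$-dimensional space is determined by a $(k-1)$-dimensional boundary spanned by $k-1$ vectors from $(\mathbb{Z}^\pm_M)^d$ together with a choice of side. Your caveat about integrality is well taken and matches the paper's convention, made explicit in \appref{sec-algebra}, that ``$M$-generated'' means generated by vectors in $(\mathbb{Z}^\pm_M)^d$.
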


\begin{example}\label{ex:pphs}
In the game graph of \autoref{fig:mean}, there are three $1$-generated
open half-spaces of interest: the half-plane
$H_2\eqdef\{(x,y):x+y<0\}$ with boundary $\vsp{(-1,1),(1,-1)}$ and
containing $(-1,-1)$, and the two half-lines
$H_1\eqdef\{(x,y):x+y=0\wedge x<0\}$ and
$H'_1\eqdef\{(x,y):x+y=0\wedge x>0\}$ with boundary $\vsp{\vec 0}$ and
containing, respectively, $(-1,1)$ and $(1,-1)$.
\ifshort Those \relax\else In turn, those three \fi open half-spaces define
two perfect half-spaces: $H_2\cup H_1$ and $H_2\cup H'_1$.
\end{example}

\ifshort\vspace{-.3em}\fi
\subsection{Hierarchy of Perfect Half-Spaces}
Finally, we fix a ranked tree-like structure on all $M$-generated
partially-perfect half-spaces, which provide a scaffolding on which we
will build strategies in multi-dimensional bounding
games.  Observe 
that an $M$-generated partially-perfect half-space 
$H_d\cup\cdots\cup H_k$ for $k>1$ can be extended using any of the 
$M$-generated open half-spaces $H$ of the boundary of $H_k$; note 
that this boundary then equals $\vsp{H}$.  In \autoref{ex:pphs}, $H_2$
can be extended using $H_1$ or $H'_1$, and
$\vsp{H_1}=\vsp{H'_1}=\{(x,y):x+y=0\}$.

The set of $M$-generated perfect half-spaces can be totally ordered by
positing a linear ordering $<$ between all $M$-generated open
half-spaces.  We write $\prec$ for the lexicographically induced
linear ordering between all $M$-generated perfect half-spaces of
$\mathbb{Q}^d$\/: if $\mathcal{H} = H_d \cup \dots \cup H_1$ and
$\mathcal{H'} = H'_d \cup \dots \cup H'_1$, we define
$\mathcal{H} \prec \mathcal{H'}$ to hold iff $H_j = H'_j$ for all
$j \in \{k+1, \dots, d\}$ and $H_k < H'_k$ for some $k \in \{1,
2, \dots, d\}$.

\section{First Cycle Bounding Games}
\label{sec-fcycle}
We define in this section \emph{first-cycle bounding games}, which
provide the key technical arguments for most of our results.  Such
games end as soon as a cycle is formed along a play, and the weight of
this cycle determines the winner, along with a colouring information
chosen by Player~2.  In sections~\ref{sub-player2}
and~\ref{sub-player1}, we are going to show that first-cycle
bounding games and infinite bounding games are equivalent, by
translating winning strategies for each Player~$p$, $p\in\{2,1\}$,
from first-cycle bounding games to bounding games.  This yields in
particular the small hypercube property of \autoref{lem-shp}.

\subsection{Definition} 
We define 
the \emph{first-cycle bounding game}
$G\!\tuple{V, E, d}$ on a multi-weighted game graph 
$\tuple{V, E, d}$:
\begin{itemize}\ifshort\vspace*{-.5em}\fi
\item
  at any Player-$1$ vertex, Player~$2$ chooses a
  $|V|\cdot \|E\|$-generated perfect half-space $\?H$ of
  $\mathbb{Q}^d$, and then Player~$1$ chooses an outgoing edge, whose
  occurrence in the play becomes coloured by~$\?H$;
\item
  at any Player-$2$ vertex, he chooses an outgoing edge;
\item
  the game finishes as soon as a vertex is visited twice,
  which produces a simple cycle $C$ with coloured Player-$1$ edges;
\item
  Player~$2$ wins if $w(C)$, the total weight of the cycle, is in the
  largest partially-perfect half-space of $\mathbb{Q}^d$ that is
  contained in all the colours in~$C$, i.e.\ the least
  common ancestor of all the colours in~$C$;
  Player~$1$ wins otherwise.
\end{itemize}

\begin{example}
  Player~2 wins the first-cycle bounding game played
  in \autoref{fig-wgame} (but loses in its lossy version).
  E.g.\ strategy~\eqref{eq-strategy2} is winning for Player~2 if he
  colours the edges outgoing from $v_0$ by the perfect half-space
  $H'_2\cup H_1$ where $H'_2\eqdef\{(x,y):x+y>0\}$ and
  $H_1\eqdef\{(x,y):x+y=0\wedge x<0\}$.
\end{example}

\begin{example}
  Player~2 wins the first-cycle bounding game played
  in \autoref{fig:mean}.  Indeed, he can choose the colour $H_2\cup
  H_1$ in $v_L$ and the colour $H_2\cup H_1'$ in $v_R$.  Then Player~1
  cannot avoid forming a simple cycle in either $H_2\cup H_1$ (if
  cycling on $v_L$), in $H_2\cup H_1'$ (if cycling on $v_R$), or in
  $H_2$ (if cycling between $v_L$ and $v_R$).
\end{example}
Observe that first-cycle bounding games are finite perfect information
games, and are thus \emph{determined}: from any vertex, either
Player~1 wins or Player~2 wins.

\subsection{Winning Strategies for Player~2}\label{sub-player2}
Suppose $\sigma$ is a strategy of Player~$2$ from a vertex $v_0$ in a
first-cycle bounding game $G\!\tuple{V, E, d}$.  Let
$\widetilde{\sigma}$ be the following strategy of Player~$2$ in the
infinite bounding game $\Gamma\!\tuple{V, E, d}$:
\begin{itemize}\ifshort\vspace*{-.5em}\fi
\item
at any Player-$2$ vertex, $\widetilde{\sigma}$ chooses 
the edge specified by~$\sigma$;
\item
whenever a cycle is formed, $\widetilde{\sigma}$ cuts it out of its memory,
and continues playing according to~$\sigma$.
\end{itemize}

\begin{restatable}{lemma}{claimplayertwo}
\label{l:Exp}
If $\sigma$ is winning for Player~2 in $G\!\tuple{V, E, d}$ from some
vertex $v_0$, then $\widetilde{\sigma}$ is winning for Player~2 in
$\Gamma\!\tuple{V, E, d}$ from the same vertex $v_0$.
\end{restatable}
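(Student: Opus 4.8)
The plan is to decompose any play consistent with $\widetilde\sigma$ into simple cycles, to observe that each cycle, at the moment it is formed, is a complete play of $G\tuple{V,E,d}$ consistent with $\sigma$ and hence won by Player~$2$, and then to argue that the resulting sequence of cycle weights cannot have bounded partial sums. Fix a play $\rho=(v_0,\vec a_0)(v_1,\vec a_1)\cdots$ consistent with $\widetilde\sigma$ in $\Gamma\tuple{V,E,d}$; I must show $\|\rho\|=\omega$. First I would make explicit the \emph{virtual play} $\widetilde\sigma$ keeps in its memory: a simple path $\pi_n$ whose Player-$1$ edges carry the colours $\sigma$ prescribes, with $\pi_0=(v_0)$, extended at each step by the move just played and, whenever that extension repeats a vertex, shortened by cutting out the cycle so formed. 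Since $\pi_n$ is always simple it has fewer than $|V|$ edges, and between two successive cuts it strictly grows, so along the infinite play the cut-out cycles form an infinite sequence $C_1,C_2,\dots$; and because total weight is additive along the edge decomposition, $\vec a_n=w(\pi_n)+\sum_i w(C_i)$, the sum ranging over the cycles already cut out by step $n$, with $\|w(\pi_n)\|\le|V|\cdot\|E\|$ by \autoref{pr:ssw}. Evaluating at the step just after $C_m$ is cut out gives $\|\vec a_n\|\ge\|\sum_{i\le m}w(C_i)\|-|V|\cdot\|E\|$, so it suffices to prove $\|\sum_{i\le m}w(C_i)\|\to\infty$ as $m\to\infty$.

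The point at which $\sigma$ enters is this: when $C_i$ is cut out, the virtual path immediately beforehand --- the simple path entering $C_i$, followed by $C_i$, carrying $\sigma$'s colours --- is a \emph{complete} play of $G\tuple{V,E,d}$ from $v_0$ consistent with $\sigma$, because $G$ terminates exactly at the first vertex repetition (which is when $C_i$ closes) and every colour and every Player-$2$ move along it is the one dictated by $\sigma$. As $\sigma$ wins for Player~$2$ from $v_0$, so does Player~$2$ win this play, hence $w(C_i)$ lies in the least common ancestor in the hierarchy of \autoref{sec-pphs} of the colours on the Player-$1$ edges of $C_i$; in particular $w(C_i)$ lies in a \emph{non-empty} $|V|\cdot\|E\|$-generated partially-perfect half-space $\?K^{(i)}$. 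By \autoref{pr:ssw} and \autoref{pr:sib}, as $i$ varies both $w(C_i)$ and $\?K^{(i)}$ range over finite sets.

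It remains to deduce $\|\sum_{i\le m}w(C_i)\|\to\infty$, and I would do this by induction on $d$, peeling the top level off the hierarchy. By pigeonhole some open half-space $H$ of $\mathbb{Q}^d$ is the level-$d$ component of $\?K^{(i)}$ for infinitely many $i$; fixing a bounded-norm normal $\vec m$ with $H=\{\vec v:\vec m\cdot\vec v<0\}$, one has $\vec m\cdot w(C_i)\le 0$ whenever $\?K^{(i)}$ has level-$d$ component $H$, with strict inequality exactly when $w(C_i)$ is not on the hyperplane $\partial H$. The inductive step then splits into two cases: either $H$ can be chosen so that \emph{all} cycle weights stay weakly on the closed side $\{\vec v:\vec m\cdot\vec v\le 0\}$ while infinitely many of them are strictly on the negative side --- whence, the weights ranging over a finite set, $\vec m\cdot\sum_{i\le m}w(C_i)\to-\infty$ and we are done --- or, after discarding finitely many cycles, the remaining weights all lie on $\partial H$, which is $(d-1)$-dimensional, and their colour-LCAs with the level-$d$ component removed witness a won first-cycle bounding game in dimension $d-1$, so the induction hypothesis applies (the finitely many discarded cycles merely shift the partial sums by a bounded amount). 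The base case $d=1$ is immediate, every $w(C_i)$ being then a non-zero integer in a fixed open ray.

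The main obstacle is exactly this inductive step: showing that the half-space $H$ can be chosen so that all cycle weights stay weakly on its closed side, or else that the play genuinely collapses onto $\partial H$ in the sense needed to invoke the lower-dimensional statement. This is where the naive reading ``the $w(C_i)$ lie in partially-perfect half-spaces, hence their partial sums blow up'' fails --- two extracted cycles whose weights are opposite boundary directions would a priori cancel --- and where the structure of $G$ must be used: one cannot make the play alternate between such regimes without the cycle decomposition also producing ``unfavourable'' cycles lying strictly inside some half-space, which is the multi-level generalisation of the switching-cost phenomenon behind \autoref{ex:mean} and \citet{chatterjee13} (and of \citeauthor{chaloupka13}'s local-strategy analysis). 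Organising this hierarchically is precisely what the ordering $\prec$ and the tree of perfect half-spaces of \autoref{sec-pphs} are for.
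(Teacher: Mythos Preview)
Your setup is correct: the cycle decomposition, the identification of each $C_i$ (together with its entering path) as a complete $\sigma$-consistent play of $G\tuple{V,E,d}$, and hence the fact that $w(C_i)$ lies in the colour-LCA $\?K^{(i)}$, are exactly as in the paper.  Reducing the lemma to the unboundedness of $\sum_{i\le m}w(C_i)$ is also the right move.  But the inductive argument you sketch for that unboundedness has a genuine gap, which you yourself flag and do not close.

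The dichotomy you want---either some level-$d$ half-space $H$ has all $w(C_i)$ in $\overline H$ with infinitely many strictly inside, or eventually all $w(C_i)$ lie on $\partial H$---is not exhaustive from what you have established.  Pigeonhole only tells you some $H$ recurs as a level-$d$ component; nothing you have said prevents a \emph{second} level-$d$ component $H'$ from also recurring, with weights in $-H$, in which case neither branch applies.  The paper fills this gap by exploiting something you use once (to get $w(C_i)\in\?K^{(i)}$) and then drop: the colours are issued by a \emph{fixed} strategy $\sigma$ acting on the residual path.  Concretely, whenever $|\rho_n|\ge|\rho_{n'}|$ with no shorter residual path in between, the cycles $C_n$ and $C_{n'}$ share a Player-$1$ vertex at which $\sigma$ sees the same history and hence issues the same colour, forcing $\?K^{(n)}$ and $\?K^{(n')}$ to be comparable for inclusion.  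Iterating this via the minimal recurring path length yields a common inclusion-minimum $\?H$ among the recurring $\?K^{(i)}$; once that is in hand, unboundedness follows by the elementary argument you sketch (this is the paper's \autoref{l:unb}).  Two further remarks: your proposed reduction to ``a won first-cycle bounding game in dimension $d-1$'' is not actually set up anywhere and is not how the paper proceeds; and your closing pointer to the linear order $\prec$ is aimed at the wrong lemma---$\prec$ organises Player~$1$'s strategy in \autoref{lem-p1}, whereas the present proof needs only the inclusion tree of partially-perfect half-spaces together with the path-length combinatorics above.
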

\ifshort\pagebreak\begin{proof}[Proof idea]\else\begin{proof}\fi
Consider any infinite play $\widetilde\rho$ consistent with
$\widetilde{\sigma}$, and let:
\begin{itemize}\ifshort\vspace*{-.5em}\fi
\item
$\rho$ be obtained from $\widetilde\rho$ by colouring all Player~$1$'s
  edges with the $|V| \cdot \|E\|$-generated perfect half-spaces of
  $\mathbb{Q}^d$ as specified by~$\sigma$;
\item
$C_1, C_2, \ldots$ be the cycle decomposition of~$\rho$,
and for each $n$, $\rho_n$ be the simple path 
that remains after removing~$C_n$;
\item
$\?H_n$ be the largest partially-perfect half-space of $\mathbb{Q}^d$
  that is contained in all the colours in $C_n$, for each~$n$.
\end{itemize}
Since $\sigma$ is winning for Player~2 in the first-cycle game, each
cycle weight $w(C_n)$ belongs to the partially-perfect half-space
$\?H_n$.  The bulk of the proof consists in extracting a `direction of
divergence' of the total energy, notwithstanding that the $\?H_n$'s may
keep varying.

In short, by distinguishing those $n$'s for which the length of the
simple path $\rho_n$ is the smallest one that occurs infinitely often,
we are \ifshort able \else going \fi to show that the set of $\?H_n$'s
that occur infinitely often has a unique smallest element
$\?H=H_d\cup\cdots\cup H_k$ with respect to inclusion.  Further
linear-algebraic reasoning \ifshort\relax\else in the
upcoming \autoref{l:unb} \fi then shows that one of the component
half-spaces $H_{k'}$ of $\?H$ provides the desired direction of
divergence: after some $N>0$, all the sums of cycle weights
$w(C_N)+w(C_{N+1})+\cdots+ w(C_n)$ belong to the topological closure
$\overline{H_{k'}}$ and their distances from the boundary of $H_{k'}$
diverge.
\ifshort See \appref{app-player2} for
  details.\end{proof}\else%

In more details now, along the infinite play $\rho$, the prefixes
$\rho_n$ might get shorter or longer but are always of length bounded
by $|V|$.  Those lengths are traced in blue in \autoref{fig-rhon}.  We
let $\ell$ be the minimal such length that occurs infinitely often.

Let us call a partially-perfect half-space that occurs infinitely
often in the sequence $\?H_1, \?H_2, \ldots$ a \emph{recurring} one.
We want to show that, among the recurring partially-perfect
half-spaces, there is one that is contained in all the others; the
subsequent \autoref{l:unb} will then allow to conclude.
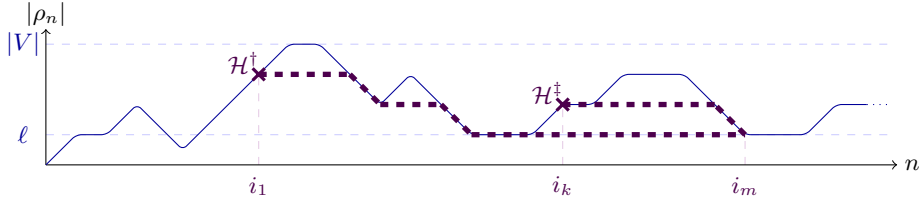
\begin{figure}
  \center
  \begin{tikzpicture}[scale=.4,every node/.style={font=\footnotesize}]
    \draw[color=blue!20,dashed] (0,4) -- (27.8,4);
    \draw[color=blue!20,dashed](0,1) -- (27.8,1);
    \draw[color=violet!20,dashed] (7,0) -- (7,3) (17,0) -- (17,2)
    (23,0) -- (23,1);
    \draw[->] (0,0) -- (28,0);
    \draw[->] (0,0) -- (0,4.5);
    \node at (0,5) {$|\rho_n|$};
    \node at (28.5,0) {$n$};
    \draw[color=black!40!blue,rounded corners=2pt]
      (0,0) -- (1,1) -- (2,1) -- (3,2) -- (4.5,0.5) -- (5,1)
            -- (8,4) -- (9,4) -- (11,2) -- (12,3) -- (14,1)
            -- (16,1) -- (17,2) -- (18,2) -- (19,3) -- (20,3)
            -- (21,3) -- (23,1) -- (25,1) -- (26,2) -- (27,2);
    \draw[color=black!40!blue,dotted] (27,2) -- (27.8,2);
    \node[color=black!40!blue] at (-.75,4) {$|V|$};
    \node[color=black!40!blue] at (-.75,1) {$\ell$};
    \draw[color=black!40!violet,line width=2pt,rounded corners=2pt,dashed] 
               (10,3) -- (11,2)    (13,2) -- (14,1)    (16,1)    (23,1) -- (22,2)    (18,2);
    \draw[color=black!40!violet,line width=2pt,rounded corners=2pt,dashed]
      (7,3) -- (10,3)    (11,2) -- (13,2)    (14,1) -- (16,1) -- (23,1)    (22,2) -- (18,2) -- (17,2);
    \draw[color=black!40!violet,very thick] (7.2,3.2) -- (6.8,2.8) (7.2,2.8) -- (6.8,3.2)
      (17.2,2.2) -- (16.8,1.8) (17.2,1.8) -- (16.8,2.2);
    \node[color=black!40!violet] at (6.5,3.4) {$\?H^\dagger$};
    \node[color=black!40!violet] at (16.5,2.4) {$\?H^\ddagger$};
    \node[color=black!40!violet] at (7,-.7) {$i_1$};
    \node[color=black!40!violet] at (17,-.7) {$i_k$};
    \node[color=black!40!violet] at (23,-.7) {$i_m$};
  \end{tikzpicture}
  \caption{Lengths of the simple paths $\rho_n$ in the proof of
    \autoref{l:Exp}.\label{fig-rhon}}
\end{figure}

First observe that, for any $n < n'$ such that $|\rho_n| \geq
|\rho_{n'}|$ and, for all $m \in \{n + 1, \ldots, n' - 1\}$, $|\rho_n|
\leq |\rho_m|$, we have that either $\?H_n$ contains $\?H_{n'}$ or
vice-versa (since $|\rho_n|$ is bounded by $|V|$ this situation must
occur infinitely often).  Indeed, in this case $C_{n}$ starts with
(and $\rho_{n}$ ends with) a vertex $v_1$ that occurs in $C_{n'}$.
Consider the sequence of vertices $v_1,\ldots,v_{k-1},v_k$ visited
along the cycle $C_{n}$ until the first Player~1 vertex $v_k$ (recall
that by assumption all the simple cycles must visit some Player~1
vertex).  Because in step~$n'$ Player~2 plays according to $\sigma$,
he will choose the same actions in Player~2 vertices
$v_1,\dots,v_{k-1}$ as in step~$n$.  Hence $v_k$ is also visited
inside $C_{n'}$, and Player~2 uses the same edge colouring at
steps~$n$ and~$n'$ in $v_k$.  Thus the two cycles $C_n$ and $C_{n'}$
share a colour, and the corresponding partially perfect half-spaces
$\?H_{n}$ and $\?H_{n'}$ must be comparable for inclusion.

Assume now that there are two incomparable (for inclusion) recurring
partially perfect half-spaces $\?H^\dagger$ and $\?H^\ddagger$ and let
us show that some partially perfect half-space $\?H\subseteq
\?H^\dagger,\?H^\ddagger$ must also be recurring.  Consider the
infinite suffix of the play where $\ell$ is the minimal observed
length.  By the previous observation, between any two occurrences of
$\?H^\dagger$ and $\?H^\ddagger$ in this suffix, we can find a
sequence
$\?H^\dagger=\?H_{i_1},\?H_{i_2},\dots,\?H_{i_k}=\?H^\ddagger$ of
comparable partially perfect half-spaces connecting the two, i.e.,
with $\?H_{i_j}\subseteq\?H_{i_{j+1}}$ or
$\?H_{i_j}\supseteq\?H_{i_{j+1}}$ for all $j$.  This is because there
will be a later occurrence of a simple path $\rho_{i_m}$ of length
$\ell\leq\min(|\rho_{i_1}|,|\rho_{i_k}|)$ for some $i_m\geq i_1,i_k$;
see the thick dashed violet line in \autoref{fig-rhon}.  Then some
$\?H\subseteq \?H^\dagger,\?H^\ddagger$ occurs among those
$\?H_{i_1},\dots,\?H_{i_k}$.  Since there are infinitely many such
pairs of occurrences of $\?H^\dagger$ and $\?H^\ddagger$ but finitely
many $|V| \cdot \|E\|$-generated partially-perfect half-spaces, there
must be infinitely many occurrences of one such $\?H$.

Applying the previous reasoning to every pair of recurring partially
perfect half-spaces $\?H^\dagger$ and $\?H^\ddagger$, we see that
there must be a recurring partially perfect half-space that is
contained in all the others.  We conclude the proof using the
following \autoref{l:unb}.
\end{proof}

\begin{claim}
\label{l:unb}
Suppose $\?H = H_d \cup \cdots \cup H_k$ is 
a partially-perfect half-space of $\mathbb{Q}^d$ and 
$\vec{a}_1, \vec{a}_2, \ldots$ is an infinite sequence of vectors such that:
\begin{itemize}
\item
the set $\{\vec{a}_1, \vec{a}_2, \ldots\}$ is finite;
\item
for each $n$, there exists a partially-perfect half-space of $\mathbb{Q}^d$ 
that contains $\?H$ and~$\vec{a}_n$;
\item
we have $\vec{a}_n \in \?H$ for infinitely many~$n$.
\end{itemize}
Then there exist $k' \in \{d, \ldots, k\}$ and $N>0$ such that
\begin{itemize}
\item
for each $n \geq N$, 
$\vec{a}_n$ belongs to $\overline{H_{k'}}$ the
topological closure of $H_{k'}$, and
\item
the set of all distances of $\vec{a}_N + \cdots + \vec{a}_n$
from the boundary of $H_{k'}$ is unbounded.
\end{itemize}
In particular, the set of all norms 
$\|\vec{a}_1 + \cdots + \vec{a}_n\|$ is unbounded.
\end{claim}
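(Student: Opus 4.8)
The plan is to exploit the lexicographic structure of the partially-perfect half-space $\?H=H_d\cup\cdots\cup H_k$. Write $W_d=\mathbb{Q}^d$ and, for $k-1\le j\le d-1$, let $W_j$ denote the $j$-dimensional boundary of $H_{j+1}$, so that $W_d\supsetneq W_{d-1}\supsetneq\cdots\supsetneq W_{k-1}$; each $H_j$ is an open half-space of $W_j$, cut out by a linear functional $f_j\colon W_j\to\mathbb{Q}$ with $\ker f_j=W_{j-1}$. Thus a vector $v$ lies in $\?H$ iff the first index $j\in\{k,\dots,d\}$ (scanning downwards) with $f_j(v)\ne 0$ has $f_j(v)<0$; and if $f_d(v)=\cdots=f_k(v)=0$ then $v\in W_{k-1}$ and $v\notin\?H$.

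First I would settle a structural fact: if $\?H'\supseteq\?H$ is partially-perfect, then $\?H'=H_d\cup\cdots\cup H_k\cup H'_{k-1}\cup\cdots$ where every extra layer $H'_j$ (for $j<k$) lies in $W_{k-1}$. Indeed, the unique full-dimensional piece of $\?H'$ must contain $H_d$, and a full-dimensional open half-space cannot be properly contained in another, forcing it to equal $H_d$; descending into $W_{d-1}$ and iterating fixes the top $d-k+1$ layers of $\?H'$ to be those of $\?H$, and $\?H'$ cannot have fewer layers since $H_k$ is disjoint from $H_d\cup\cdots\cup H_{k+1}$. Since moreover $\?H'$ is blunt, the second hypothesis forces each $\vec a_n$ into $\?H\sqcup(W_{k-1}\setminus\{\vec 0\})$, these two pieces being disjoint.

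Then I would choose the level. As $\?H$ has finitely many layers and $\vec a_n\in\?H$ for infinitely many $n$, by pigeonhole there is a largest $j_0\in\{k,\dots,d\}$ with $\vec a_n\in H_{j_0}$ for infinitely many $n$; put $k':=j_0$. Pick $N$ past the last occurrence of every value occurring only finitely often (there are finitely many distinct values). Then for $m\ge N$: if $\vec a_m\in\?H$ it occurs infinitely often, so by maximality of $j_0$ it lies in $H_{j_0}\cup\cdots\cup H_k\subseteq W_{j_0}$; otherwise $\vec a_m\in W_{k-1}\subseteq W_{j_0}$. In either case $f_{j_0}(\vec a_m)\le 0$, with strict inequality exactly when $\vec a_m\in H_{j_0}$ and with $f_{j_0}(\vec a_m)=0$ when $\vec a_m$ sits in a strictly lower layer or in $W_{k-1}$ (all of which lie in $\ker f_{j_0}=W_{j_0-1}$). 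Hence $S_n:=\vec a_N+\cdots+\vec a_n\in W_{j_0}$ and $f_{j_0}(S_n)=\sum_{m=N}^n f_{j_0}(\vec a_m)\le 0$, i.e. $S_n\in\overline{H_{k'}}$ for all $n\ge N$; and since infinitely many summands equal a fixed $\vec b\in H_{j_0}$ with $f_{j_0}(\vec b)<0$, we get $f_{j_0}(S_n)\to-\infty$. As $\mathrm{dist}(S_n,W_{j_0-1})\ge|f_{j_0}(S_n)|/C$ for the operator-norm constant $C$ of $f_{j_0}$ (norms on $\mathbb{Q}^d$ being equivalent), the distances of $S_n$ from the boundary $W_{j_0-1}$ of $H_{k'}$ are unbounded; finally $\|S_n\|\ge\mathrm{dist}(S_n,W_{j_0-1})$ and $\|\vec a_1+\cdots+\vec a_n\|\ge\|S_n\|-\|\vec a_1+\cdots+\vec a_{N-1}\|$ yield the last claim.

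The part I expect to be the real work is the structural fact in the second paragraph — characterising which vectors can share a partially-perfect half-space with $\?H$, namely those in $\?H$ or in $W_{k-1}$, which rests on an extension of $\?H$ keeping its top layers and only refining below $W_{k-1}$, together with bluntness. If the linear-algebra preliminaries already supply a lemma describing the order on $M$-generated partially-perfect half-spaces, this step is immediate and the proof reduces to the pigeonhole-and-telescoping argument of the third paragraph.
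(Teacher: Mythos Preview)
Your proof is correct and follows essentially the same approach as the paper: choose $k'$ maximal with $\vec a_n\in H_{k'}$ infinitely often, take $N$ large enough to discard the finitely many occurrences above level $k'$, and then observe that all remaining terms lie in $\overline{H_{k'}}$ while infinitely many lie strictly in $H_{k'}$, driving the partial sums away from the boundary. The one place where you are more careful than the paper is the structural fact in your second paragraph: the paper simply asserts that $\{1,2,\ldots\}$ partitions into the three sets $\{n:\vec a_n\in H_d\cup\cdots\cup H_{k'+1}\}$, $\{n:\vec a_n\in H_{k'}\}$, and $\{n:\vec a_n\text{ is in the boundary of }H_{k'}\}$, which tacitly uses exactly the characterisation you prove (that any partially-perfect half-space containing $\?H$ agrees with $\?H$ on its top layers and refines only inside $W_{k-1}$). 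Your formulation via the linear functionals $f_j$ is a clean way to make the distance argument precise, but it is the same idea as the paper's ``positive minimal distance'' observation.
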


\begin{proof}
We have that $\?H$ is of the form $H_d \cup \cdots \cup H_k$.
Let $k' \in \{d, \ldots, k\}$ be maximal such that 
$\vec{a}_n \in H_{k'}$ for infinitely many $n$.
Observe that $\{1, 2, \ldots\}$, the set of all positive integers,
can be partitioned into three:
\begin{enumerate}
\item
The set of all $n$ such that $\vec{a}_n \,\in\, H_d \cup \cdots \cup
H_{k' + 1}$, which is finite by definition of $k'$.  We let $N$ be
larger than the index of the last such~$\vec a_n$; then the vectors
$\vec a_n$ for $n\geq N$ belong to $\overline{H_{k'}}$.
\item
The set of all $n$ such that $\vec{a}_n \in H_{k'}$, which is infinite
by definition of $k'$.  Since $H_{k'}$ is open and the set
$\{\vec{a}_1, \vec{a}_2, \ldots\}$ is finite, there is a positive
minimal distance of those $\vec{a}_n$ from the boundary of~$H_{k'}$.
These vectors bring the sums $\vec a_N+\cdots+\vec a_n$ for $n\geq N$
unboundedly far from the boundary of $H_{k'}$. 
\item
The set of all $n$ such that $\vec{a}_n$ is contained in the boundary
of~$H_{k'}$.  These vectors have no effect on the distance between the
sums $\vec a_N+\cdots+\vec a_n$ for $n\geq N$ and the boundary of
$H_{k'}$.\qedhere
\end{enumerate}
\end{proof}

\fi

\subsection{Winning Strategies for Player~1}\label{sub-player1}

If there is no winning strategy for Player~2 in the first-cycle
bounding game $G\!\tuple{V, E, d}$ from a vertex~$v_0$, 
then by determinacy of first-cycle bounding games, there is a winning
strategy~$\sigma$ for Player~1 in $G\!\tuple{V, E, d}$ from~$v_0$
.

\begin{example}\renewcommand{\theenumi}{\roman{enumi}}\renewcommand{\labelenumi}{(\theenumi)}
  \ifshort Consider the lossy version of the game graph
  in~\autoref{fig-wgame}.  \else Recall the lossy game graph
  from~\autoref{fig-lgame}.  \fi Because Player~1 wins the energy game
  with initial credit $(2,2)$, by \autoref{cor:energy.arbitrary}
  and \autoref{l:Exp}, she wins the first-cycle bounding game.  One
  winning strategy, whose moves depend only on the latest visited
  vertex (here only $v_0$) and colour $\?H$ chosen by Player~2 in
  $v_0$, is as follows: \begin{enumerate}\ifshort\vspace*{-.5em}\fi\item\label{ex-colour1} if
  $(-2,2)$ and $(-1,3)$ are both outside $\?H$, move to $v_L$,
  and \item\label{ex-colour2} if $(2,-1)$ and $(3,-3)$ are both
  outside $\?H$, move to $v_R$, and \item\label{ex-colour3} otherwise
  perform the self-loop labelled $(-1,0)$.\ifshort\vspace*{-.5em}\fi\end{enumerate} Observe
  that the first two cases~\eqref{ex-colour1} and~\eqref{ex-colour2}
  are disjoint.  Since there is no perfect half-space that contains
  $(-1,0)$ and intersects both $\{(-2,2),(-1,3)\}$ and
  $\{(2,-1),(3,-3)\}$, this strategy is indeed winning for
  Player~1---the same would apply if she were to choose the other
  self-loop $(0,-1)$ instead.
\end{example}

The proof of our main result consists in constructing from $\sigma$ a
finite-memory winning strategy $\widetilde{\sigma}$ for Player~1 in
the infinite bounding game $\Gamma\!\tuple{V, E, d}$ from $v_0$, which
\ifshort ensures \else balances her various `perfect half-space avoidance strategies' in
order to ensure \fi the small hypercube property stated
in \autoref{lem-shp}.
Let us outline this construction.  The memory of $\widetilde{\sigma}$
consists of:
\begin{description}\ifshort\vspace*{-.5em}\fi
\item[a simple path]
   $\gamma$ from the initial vertex $v_0$ to the current
  vertex~$v$, in which Player~$1$'s edges are coloured by 
  $|V| \cdot \|E\|$-generated perfect half-spaces
  of~$\mathbb{Q}^d$ (this can be represented concretely by a sequence
  of coloured edges from $E$);
\item[a colour]
  i.e.\ a $|V| \cdot \|E\|$-generated perfect half-space 
  $\mathcal{H} = H_d \cup \cdots \cup H_1$ of~$\mathbb{Q}^d$
  (initially the $\prec$-minimal one);  
\item[counters] $\mathsf{c}(k, W)$
  for every $k \in \{1, 2, \dots, d\}$
  and for every nonzero total weight $W$ of a simple cycle, which are
  natural numbers (initially $0$).
\end{description}
Strategy~$\widetilde{\sigma}$ copies its moves from strategy~$\sigma$
for the first-cycle bounding game, based on the coloured simple path
and the colour it has in its memory.  Whenever a cycle is formed it is
removed from the simple path, and provided its weight $W$ is nonzero,
all the counters $\mathsf{c}(k,W)$ are incremented.

Together with the current path, the counters provide the current
energy level, which equals $w(\gamma)+\sum_W \mathsf{c}(d, W) \cdot W$
throughout the play, where $W$ ranges over all simple cycle weights.
To keep the counters and thus the total energy bounded,
$\widetilde\sigma$ may perform one of the following operations after a
counter increment:
\begin{itemize}\ifshort\vspace*{-.5em}\fi
\item
a \emph{$k$-shift to $H'_k>H_k$} changes the current colour $\?H$ to
the $\prec$-minimal perfect half-space of the form $H_d\cup\cdots\cup
H_{k+1}\cup H'_k\cup\cdots\cup H'_1$, and resets to $0$ all the counters
$\mathsf{c}(k',W)$ with $k'<k$;
\item 
a \emph{$k$-cancellation} changes the current colour $\?H$ to the
$\prec$-minimal perfect half-space of the form $H_d\cup\cdots\cup
H_{k+1}\cup H'_k\cup\cdots\cup H'_1$.  Simultaneously, given some
simple cycle weights $W_1,\ldots,W_n$ and a positive integral
solution $\vec x$ to \ifshort$\else
\begin{equation}\label{eq-system}\fi\sum_{i=1}^n\vec x(i) W_i=\vec
0\ifshort$, \else\;,\end{equation}\fi it subtracts
$\vec x\cdot u(k)$
where \ifshort$\else\begin{equation}\fi u(k)\eqdef
(4|V| \cdot \|E\|)^{(2k-1)(d+2)^2} \ifshort$\else\end{equation}\fi
from all the tuples
$\tuple{\mathsf{c}(k',W_1),\ldots,\mathsf{c}(k',W_n)}$ with $k'\geq
k$, and resets to $0$ all the counters $\mathsf{c}(k',W)$ with $k'<k$.
\end{itemize}
\ifshort\relax\else
These two operations define the main phases of the strategy
$\widetilde{\sigma}$.  A \emph{$k$-event} is either a $k$-shift or a
$k$-cancellation.
By a \emph{$k$-month} we mean a maximal period with only
${<}k$-events.  By a \emph{$k$-year} we mean a maximal period with
only ${<}k$-cancellations and ${\leq}k$-shifts.  This hierarchy of
$k$-events mirrors in some sense the hierarchical structure of
$|V|\cdot\|E\|$-generated perfect half-spaces.
\fi

These operations allow to maintain two main invariants, from which the
small hypercube property of \autoref{lem-p1} is
derived\ifshort\relax\else\ (see \autoref{cl-player1})\fi.  For all
$1\leq k\leq d$ and simple path weights $W$ in the span of $H_k$:
\begin{itemize}\ifshort\vspace*{-.5em}\fi
\item initially, after any ${>}k$-shift, and after any ${\geq}k$-cancellation, \ifshort$\else\begin{equation}\label{eq-ksoft}\fi
\mathsf{c}(k,W)<\?U(k)\eqdef(4|V| \cdot \|E\|)^{2k(d+2)^2}
\ifshort$, \else\end{equation}\fi the so-called \emph{$k$-soft
bound};
\item at all times, \ifshort$\else\begin{equation}\label{eq-khard}\fi
\mathsf c(k,W)<\?U(k)+u(k)\ifshort$, \else\end{equation}\fi the
so-called \emph{$k$-hard bound}.
\end{itemize}

To ensure those invariants, strategy~$\widetilde\sigma$ further
maintains that, whenever $\mathsf c(k,W)\geq \?U(k)$ and $W$ is in
$\vsp{H_k}$, then $W$ is in $\overline{H_k}$.  When this new invariant
cannot be preserved by any $k$-shift, then a version of the
Farkas-Minkowski-Weyl Theorem implies that it can be enforced through
a $k$-cancellation, in which a small positive integral solution can be
found for \ifshort the associated system of
equations \else \eqref{eq-system} \fi where $W_1,\dots,W_n$ are the offending
cycle weights.

This strategy shows a statement dual to \autoref{l:Exp}, and thereby
entails both the equivalence of infinite bounding games with
first-cycle bounding games and the small hypercube property
of \autoref{lem-shp}\ifshort\ (see \appref{app-player1} for a
proof)\fi:
\begin{restatable}{lemma}{claimplayerone}\label{lem-p1}
  If $\sigma$ is winning for Player~1 in $G\!\tuple{V, E, d}$ from
  some vertex $v_0$, then $\widetilde{\sigma}$ is winning for Player~1
  in $\Gamma\!\tuple{V, W, d}$ from $v_0$, and ensures energy levels
  of norm at most $(4 |V| \cdot \|E\|)^{2 (d + 2)^3}$.
\end{restatable}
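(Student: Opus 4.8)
The plan is to analyze the finite-memory strategy $\widetilde\sigma$ described above and show that, under the assumption that $\sigma$ is winning for Player~1 in $G\tuple{V,E,d}$, every play consistent with $\widetilde\sigma$ keeps the current energy level inside a hypercube of the claimed radius $(4|V|\cdot\|E\|)^{2(d+2)^3}$. Since the current energy level equals $w(\gamma)+\sum_W \mathsf{c}(d,W)\cdot W$ at all times, where $\|w(\gamma)\|\leq |V|\cdot\|E\|$ because $\gamma$ is a simple path, and each simple cycle weight $W$ has norm at most $|V|\cdot\|E\|$ by \autoref{pr:ssw}, everything reduces to bounding the counters $\mathsf{c}(d,W)$ — and more generally the counters $\mathsf{c}(k,W)$ — by a function that, summed over the at most $(2|V|\cdot\|E\|+1)^d$ possible cycle weights, yields the stated bound. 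So the heart of the argument is a hierarchical (induction on $k$ from $d$ down to $1$) bound on counter values and on the number of $k$-events.

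First I would establish the structural invariants maintained by $\widetilde\sigma$: (i) at every point, the colour $\?H = H_d\cup\cdots\cup H_1$ in memory is a $|V|\cdot\|E\|$-generated perfect half-space, and $\widetilde\sigma$'s concrete moves are exactly those dictated by the winning first-cycle strategy $\sigma$ on the coloured simple path; (ii) the energy-level identity $w(\rho|_n) = w(\gamma) + \sum_W \mathsf{c}(d,W)\cdot W$ is preserved — this is immediate for cycle formations that increment counters, and is exactly why $k$-cancellations are defined to subtract a multiple of an integral solution of $\sum_W x_W W = \vec 0$; (iii) whenever the $k$-soft bound $\?U(k)$ fails for a weight $W\in\widehat{H_k}$, a $k$-event is triggered, and after it every weight $W$ with $\mathsf{c}(k,W)\geq\?U(k)$ lies in $\widehat{H'_k}$ (for a $k$-shift) or has been cancelled down below $\?U(k)$ (for a $k$-cancellation). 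The crucial point for invariant (iii) is that a $k$-shift is always possible as long as the relevant weights can be separated from $\overline{H_k}$ by an $M$-generated open half-space of $\vsp{H_k}$; if not, the integer linear algebra of \appref{sec-algebra} guarantees a small positive solution, so a $k$-cancellation re-establishes the soft bounds.

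The main obstacle is the counting estimate: bounding the number of $k$-events within a single $(k+1)$-month, hence bounding each counter $\mathsf{c}(k,W)$, and showing the recursion closes with the exponents in $\?U(k)=(4|V|\cdot\|E\|)^{2k(d+2)^2}$ and the final bound $(4|V|\cdot\|E\|)^{2(d+2)^3}$. I would argue as follows: between two consecutive $k$-shifts within a fixed $(k+1)$-month the memory colour $H_k$ strictly increases in the fixed linear order $<$, and by \autoref{pr:sib} there are at most $\?L(k)=2(2|V|\cdot\|E\|+1)^{d(k-1)}$ choices of $H_k$ inside a fixed $\vsp{H_k}$, so at most that many $k$-shifts occur before a $k$-cancellation is forced; each $k$-cancellation, in turn, resets to the $\prec$-minimal $H_k$, but — and this is where the winning property of $\sigma$ enters — an infinite sequence of $k$-cancellations in one $(k+1)$-month would mean Player~1 is being forced to form cycles in a fixed partially-perfect half-space of dimension $\geq k$ infinitely often, contradicting that $\sigma$ blocks exactly such cycles; so the number of $k$-cancellations per $(k+1)$-month is also finite and controlled. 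Counting how many cycles of a given weight can be formed before a $k$-event fires (at most $\?U(k)$ per reset period) and multiplying by the number of $k$-events, then the number of $(k{+}1)$-months, and so on down the hierarchy, gives the claimed product; the exponent bookkeeping is exactly designed so that $d\cdot(d+2)^2\cdot(\text{log of }\?L\text{-factors}) \leq (d+2)^3$ with room to spare. Finally, once all counters $\mathsf{c}(d,W)$ are bounded by $\?U(d)=(4|V|\cdot\|E\|)^{2d(d+2)^2}$, summing $\|w(\gamma)\| + \sum_W \mathsf{c}(d,W)\|W\|$ over the $(2|V|\cdot\|E\|+1)^d$ weights stays below $(4|V|\cdot\|E\|)^{2(d+2)^3}$, and since the energy level never escapes this hypercube the play is winning for Player~1 in $\Gamma\tuple{V,E,d}$; this simultaneously proves the small hypercube property of \autoref{lem-shp} and, combined with \autoref{l:Exp}, the equivalence of the two games.
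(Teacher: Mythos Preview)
Your outline has the right scaffolding --- the energy identity $w(\rho|_n)=w(\gamma)+\sum_W \mathsf c(d,W)\cdot W$, the reduction to bounding the top counters, and the final summation over the at most $(2|V|\cdot\|E\|+1)^d$ cycle weights are exactly what the paper does. But the heart of the argument, your ``counting estimate'' paragraph, has a genuine gap in two related places.

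First, your use of the winning hypothesis on $\sigma$ is qualitative where a quantitative statement is needed. You argue that infinitely many $k$-cancellations in one $(k{+}1)$-month would contradict $\sigma$ being winning; that gives finiteness but no bound, and in fact the paper never bounds the number of $k$-cancellations at all. The actual leverage from $\sigma$ is local and precise: during a single $k$-month with current colour $H_k^n$, if a cycle weight $W$ lies in the open half-space $H_k^n$, then a vertex cannot be visited twice while closing a cycle of weight $W$, because otherwise Player~2 could colour with $H_d\cup\cdots\cup H_k^n$ and $\sigma$ would lose the first-cycle game. Hence such cycles can only be closed by consuming edges of the simple path present at the start of the month, so $\mathsf c(k,W)$ increases by at most $|V|$ in that month. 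This $|V|$-per-month bound is the quantitative content you are missing.

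Second, the right decomposition is not ``number of $k$-events $\times$ $\?U(k)$ per reset period'': the counters $\mathsf c(k,\cdot)$ are \emph{not} reset by $k$-shifts or $k$-cancellations, so that product does not bound anything. The paper instead proves directly, by nested induction on $k$ and on the sequence of $k$-years, the \emph{$k$-hard bound} $\mathsf c(k,W)<\?U(k)+u(k)$ for all $W\in\vsp{H_k}$, where $u(k)=(4|V|\cdot\|E\|)^{(2k-1)(d+2)^2}$. Within a $k$-year there are at most $\?L(k)$ $k$-months; once $W$ has failed the soft bound, every subsequent shift puts $W$ into $\overline{H_k^n}$, and then either $W\in H_k^n$ (increase $\leq |V|$ by the argument above) or $W$ is on the boundary (increase $\leq \?U(k{-}1)+u(k{-}1)$ by the inductive $(k{-}1)$-hard bound). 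Summing over $\?L(k)$ months gives the overshoot $\leq u(k)$. A $k$-cancellation then subtracts at least $u(k)$ (since $\vec x(i)\geq 1$), re-establishing the soft bound for the next $k$-year --- no separate count of cancellations is required. Once you have the $d$-hard bound globally, your final summation goes through.
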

\ifshort\relax\else\ifshort This section presents the proof of the following main lemma:
\claimplayerone*
\fi
\subsubsection{$\widetilde{\sigma}$ Summarised.}
Let us first summarise the definition of $\widetilde{\sigma}$.  At any
Player-$1$ vertex, $\widetilde{\sigma}$ chooses the edge that $\sigma$
specifies for history $\gamma$ and perfect half-space~$\mathcal{H}$.
After any move that leads to a vertex not occurring in $\gamma$, the
memory of $\widetilde{\sigma}$ is updated only by extending~$\gamma$.
Otherwise, a cycle~$C$ is formed, and the memory is updated as
follows:
\begin{itemize}
\item
  Cycle $C$ is cut out of~$\gamma$.
  For all $k \in \{1, 2, \dots, d\}$, 
  counters $\mathsf{c}(k, w(C))$ are incremented, unless 
  $w(C) = \vec{0}$.  

\item
  If all soft upper bounds hold, that is if for all 
  $k \in \{1, 2, \dots, d\}$ and all simple-cycle weights  
  $W \in \widehat{H_k}$ we have
  $\mathsf{c}(k, W) < \mathcal{U}(k)$, 
  then the memory update is finished.

\item
  Otherwise, let $k \in \{1, 2, \dots, d\}$ be the largest for which 
  the $k$-soft upper bound $\mathsf{c}(k, W) < \mathcal{U}(k)$ 
  fails for some $W \in \widehat{H_k}$. 

\item
  ($k$-shift) If there is a $|V| \cdot \|E\|$-generated open
  half-space $H$ of $\vsp{H_k}$ such that the $k$-soft upper bound
  holds for all simple-cycle weights $W \in \widehat{H}$, then
  denoting by $H'_k$ the $<$-minimal such $H$, $\mathcal{H}$ is
  replaced by the $\prec$-minimal perfect half-space of form
  $H_d\cup\cdots\cup H_{k+1}\cup H'_k\cup\cdots\cup H'_1$.  All
  counters $\mathsf{c}(k', W)$, where $k' \in \{1, 2, \dots, k-1\}$
  and $W$ is a simple-cycle weight, are reset to~0.

\item
  ($k$-cancellation)
  Otherwise, let $W_1, W_2, \dots, W_n$ be all the non-zero
  simple-cycle weights in~$\vsp{H_k}$ that fail the $k$-soft upper
  bound, and let~$\vec A$ be the matrix whose columns are the 
  vectors~$W_1, W_2, \dots, W_n$. 
  Then by duality and existence of small positive integer solutions of
  systems of linear equations 
  (\autoref{l:alt}, \autoref{l:small} and \autoref{pr:ssw}), 
  it follows that $\vec A \vec{x} = \vec{0}$ has 
  a solution in positive integers bounded by $\mathcal{S}(k)\eqdef (2 (|V|
  \cdot \|E\| + 1))^{(k + 2)^2}$.  The perfect half-space
  $\mathcal{H}$ is replaced by the $\prec$-minimal $|V| \cdot
  \|E\|$-generated perfect half-space of the form $H_d\cup\cdots\cup
  H_{k+1}\cup H'_k\cup\cdots\cup H'_1$.  For every~$W_i$ and every $k' \in \{k, k+1, \dots, d\}$, the
  value of $\mathsf{c}(k', W_i)$ is replaced by \begin{equation}\label{eq-wf}\mathsf{c}(k', W_i) -
  u(k) \cdot \vec{x}(i)\;.\end{equation}  All counters $\mathsf{c}(k', W)$, where $k'
  \in \{1, 2, \dots, k-1\}$ and $W$ is a simple-cycle weight, are
  reset to~0.  A $k$-cancellation is \emph{well-defined} if all the
  differences in \eqref{eq-wf} are non-negative.
\end{itemize}

We prove \autoref{lem-p1} through a sequence of claims.  The first
claim shows that $k$-cancellations are always well-defined:
\begin{claim}\label{cl-cancel}
  Every $k$-cancellation is well-defined.
\end{claim}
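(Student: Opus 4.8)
The plan is to establish, by induction over the moves of the play, the invariant that all counters stay non-negative; the heart of the matter is a \emph{comparison lemma} asserting that, at the moment a $k$-cancellation is triggered, $\mathsf{c}(k',W_i)\geq\mathsf{c}(k,W_i)$ holds for every $k'\in\{k,\dots,d\}$ and every weight $W_i$ affected by the cancellation. Granting the lemma, well-definedness follows quickly: each $W_i$ fails the $k$-soft bound at that moment, so $\mathsf{c}(k,W_i)\geq\?U(k)$ and hence $\mathsf{c}(k',W_i)\geq\?U(k)$; and it remains to check the numerical inequality $u(k)\cdot\?S(k)\leq\?U(k)$. For the latter, $2(|V|\cdot\|E\|+1)\leq 4|V|\cdot\|E\|$ gives $\?S(k)\leq(4|V|\cdot\|E\|)^{(k+2)^2}$, whence $u(k)\cdot\?S(k)\leq(4|V|\cdot\|E\|)^{(2k-1)(d+2)^2+(k+2)^2}\leq(4|V|\cdot\|E\|)^{2k(d+2)^2}=\?U(k)$, the last step because $(k+2)^2\leq(d+2)^2$ for $k\leq d$. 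Since each entry of $\vec{x}$ lies in $\{1,\dots,\?S(k)\}$, this yields $u(k)\cdot\vec{x}(i)\leq u(k)\cdot\?S(k)\leq\?U(k)\leq\mathsf{c}(k',W_i)$, so every difference $\mathsf{c}(k',W_i)-u(k)\cdot\vec{x}(i)$ in \eqref{eq-wf} is non-negative; plain increments and shifts obviously preserve non-negativity of the counters, so this closes the induction.

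To prove the comparison lemma, I would fix the $k$-cancellation and look at the window stretching back to the last event of index ${>}k$ (or to the start of the play, if there is none); by construction this window contains no shift and no cancellation of index ${>}k$. Over this window neither $\mathsf{c}(k,W_i)$ nor $\mathsf{c}(k',W_i)$ is ever reset, since resetting $\mathsf{c}(m,\cdot)$ requires an event of index ${>}m\geq k$; both are incremented by $1$ exactly when a cycle of weight $W_i$ is closed; each $j$-shift in the window has $j\leq k$ and thus only resets counters of index ${<}j\leq k$, touching neither; and each $j$-cancellation in the window has $j\leq k\leq k'\leq d$, so by \eqref{eq-wf} it subtracts the same amount from $\mathsf{c}(k,W_i)$ and from $\mathsf{c}(k',W_i)$ whenever $W_i$ is among its affected weights, and leaves both alone otherwise. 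Hence $\mathsf{c}(k',W_i)-\mathsf{c}(k,W_i)$ is constant across the window, equal to its value at the window's start, where $\mathsf{c}(k,W_i)=0$ (it was just reset, or the play just began) and $\mathsf{c}(k',W_i)\geq0$ by the induction hypothesis; so $\mathsf{c}(k',W_i)\geq\mathsf{c}(k,W_i)$ at the instant the $k$-cancellation fires.

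The step I expect to be the main obstacle is the comparison lemma — in particular, isolating the right window and verifying that, restricted to it, $\mathsf{c}(k,W_i)$ and $\mathsf{c}(k',W_i)$ are subjected to exactly the same increments and decrements, so that their gap, which starts non-negative, can never turn negative. Everything else is bookkeeping about which events reset or modify which counters, together with the routine exponent estimate above.
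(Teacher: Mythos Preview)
Your proposal is correct and follows essentially the same route as the paper: both arguments chain the inequalities $\mathsf{c}(k',W_i)\geq\mathsf{c}(k,W_i)\geq\?U(k)\geq u(k)\cdot\?S(k)\geq u(k)\cdot\vec{x}(i)$, with the same numerical estimate for $\?U(k)\geq u(k)\cdot\?S(k)$. The paper simply cites the first inequality as ``monotonicity of the counters for $k$'', whereas you spell it out as your comparison lemma via the window argument; your detailed justification of that step is correct and is a useful expansion of what the paper leaves implicit.
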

\begin{proof}
  We need to show that, just before the $k$-cancellation, for every
  cycle weight $W_i$ in $\vsp{H_k}$ that fails the $k$-soft upper
  bound, and all $k'\geq k$, $\mathsf{c}(k',W_i)\geq u(k)\cdot\vec x(i)$.
  Indeed,
  \begin{align*}
    \mathsf{c}(k',W_i)&\geq  \mathsf{c}(k,W_i)&&\text{by monotonicity
      of the counters for $k$,}\\
    &\geq\?U(k)&&\text{since $W_i$ fails the $k$-soft upper bound,}\\
    &=(4|V|\cdot\|E\|)^{(2k-1)(d+2)^2}\!\!\!\!\!\!\!\!\!&&\cdot(4|V|\cdot\|E\|)^{(d+2)^2}\\
    &\geq u(k)\cdot\?S(k)&&\text{since $\?S(k)\eqdef(2(|V|\cdot\|E\|+1))^{(k+2)^2}$,}\\
    &\geq u(k)\cdot\vec x(i)&&\text{by \autoref{l:small}.}\qedhere
  \end{align*}
\end{proof}

As explained before, the $k$-soft bound $\?U(k)$ in~\eqref{eq-ksoft}
is employed by $\widetilde{\sigma}$ to trigger a $k$-event and a
change of strategy to avoid cycles with weight inside some perfect
half-spaces.  However, this change of strategy might allow a few more
instances of those cycles to be formed---but, crucially, no more than
$u(k)$ further instances.  The $k$-hard bound in~\eqref{eq-khard} is
therefore enforced.  

This informal argument is proven formally in the
following \autoref{cl-player1}.  It entails in particular that the
$d$-hard bound is \emph{always} enforced, since $\vsp{H_d}$ is the
whole space $\+Q^d$:
\begin{claim}\label{cl-player1}
  For all $k \in \{1, 2, \dots, d\}$ and for all cycle weights~$W$ in
  $\vsp{H_k}$,
  \begin{enumerate}
  \item\label{cl-p1-1}($k$-soft bound)
    at the beginning of every $k$-year,
    $\mathsf{c}(k, W) < \?U(k)$, and
  \item\label{cl-p1-2}($k$-hard bound)
    $\mathsf{c}(k, W) < \mathcal{U}(k) + u(k)$.
  \end{enumerate}
\end{claim}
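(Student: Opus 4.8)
The plan is to prove both parts simultaneously by induction on $k=1,\dots,d$, and, inside the step for a fixed $k$, by a secondary induction on the $k$-years taken in chronological order: first the $k$-soft-bound part for a given $k$-year is derived from the $k$-hard-bound part holding at all \emph{earlier} moments, and then the $k$-hard-bound part is established for that $k$-year from the level-$(k-1)$ statement.

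\textbf{The soft bound from the hard bound.} A $k$-year begins at the start of the play, just after a ${>}k$-event, or just after a $k$-cancellation. In the first two cases every level-$k$ counter is $0$ — counters start at $0$, and a ${>}k$-event resets every counter at a level below its own, hence at level $k$ — so the $k$-soft bound holds at once. In the last case, let $W_1,\dots,W_n$ enumerate the non-zero simple-cycle weights of $\vsp{H_k}$ that fail the $k$-soft bound just before the cancellation: a weight $W\in\vsp{H_k}$ not among them already has $\mathsf{c}(k,W)<\?U(k)$ and is left untouched, whereas each $\mathsf{c}(k,W_i)$, which is ${<}\,\?U(k)+u(k)$ by the $k$-hard bound of the \emph{preceding} $k$-year, is replaced by $\mathsf{c}(k,W_i)-u(k)\cdot\vec x(i)<\?U(k)$ because $\vec x(i)\geq 1$ (\autoref{l:small}). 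As this invokes the $k$-hard bound only strictly earlier in time, there is no circularity.

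\textbf{The hard bound.} Fix a $k$-year $Y$; by the above, $\mathsf{c}(k,W)<\?U(k)$ for all $W\in\vsp{H_k}$ at its start. Throughout $Y$ the components $H_{k+1},\dots,H_d$ stay fixed, so $\vsp{H_k}$ is fixed, and each $k$-shift moves $H_k$ strictly up in the linear order on the $|V|\cdot\|E\|$-generated open half-spaces of $\vsp{H_k}$, whose number is at most $\?L(k)$ by \autoref{pr:sib}; hence $Y$ consists of at most $\?L(k)$ $k$-months. The key point is that \emph{within a single $k$-month the difference $\mathsf{c}(k,W)-\mathsf{c}(k-1,W)$ is constant}: during a $k$-month (which contains only ${<}k$-events) each cycle formation adds $+1$ to both counters, each ${<}k$-cancellation subtracts the same $u(j)\cdot\vec x(i)$ from both (or from neither), and ${<}k$-shifts touch neither; moreover every $k$-month opens right after a ${\geq}k$-event, which zeroes all level-$(k-1)$ counters, so $\mathsf{c}(k-1,W)=0$ at the month's start and the constant difference equals the month-start value of $\mathsf{c}(k,W)$. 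Consequently, throughout a $k$-month $\mathsf{c}(k,W)=\mathsf{c}(k-1,W)+\mathsf{c}(k,W)_{\textup{start}}$, which the level-$(k-1)$ induction bounds by $\bigl(\?U(k-1)+u(k-1)\bigr)+\mathsf{c}(k,W)_{\textup{start}}$ when $W\in\vsp{H_{k-1}}$; for $W\in\widehat{H_k}$ one instead has the stronger $\mathsf{c}(k,W)<\?U(k)$ outright, since the $k$-month ends the instant some such counter reaches $\?U(k)$; and for $W$ lying in the current $H_k$, the strategy $\widetilde{\sigma}$ — mirroring the winning first-cycle strategy $\sigma$ — forms no cycle of weight $W$ at all, apart from at most $O(|V|)$ ``stale-prefix'' cycles just after each $k$-shift, when colour changes along $\gamma$ leave the $k$-th component of the least common ancestor undefined. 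Telescoping over the $\leq\?L(k)$ $k$-months of $Y$, starting from $\mathsf{c}(k,W)<\?U(k)$, gives $\mathsf{c}(k,W)<\?U(k)+\?L(k)\cdot\bigl(\?U(k-1)+u(k-1)\bigr)$, and the choice of exponents makes $\?L(k)\cdot\bigl(\?U(k-1)+u(k-1)\bigr)\leq u(k)$ — in the exponents this is $d(k-1)\leq(d+2)^2$, true because $k\leq d$ — which is the $k$-hard bound. The base case $k=1$ is the same with the level-$0$ term vacuous and with a $1$-month carrying no event at all; and the $d$-hard bound, since $\vsp{H_d}=\mathbb{Q}^d$, is exactly the uniform energy-level bound that feeds into \autoref{lem-p1}.

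\textbf{Main obstacle.} The delicate part is this per-$k$-month ``lockstep'' property together with its bookkeeping: one must verify that none of the lower-level operations — shifts or cancellations at any level ${<}k$ — spoils the equality $\mathsf{c}(k,W)-\mathsf{c}(k-1,W)=\textup{const}$, use that $k$-cancellations are well-defined (\autoref{cl-cancel}) so that no counter ever goes negative, exploit the identity $\sum_i\vec x(i)W_i=\vec 0$ so that these decrements preserve the energy-level invariant $w(\gamma)+\sum_W\mathsf{c}(d,W)\cdot W$, and, for weights currently inside $H_k$, bound the spurious cycles caused by colour changes at $k$-shifts. The remaining exponent arithmetic (the inequality above and $\?U(k)=u(k)\cdot(4|V|\cdot\|E\|)^{(d+2)^2}$) is then routine.
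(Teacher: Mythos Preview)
Your proof is correct and follows the paper's approach closely: the same nested induction on $k$ and then on $k$-years, the same derivation of the soft bound at a year's start from the previous year's hard bound via the $k$-cancellation arithmetic, and the same three-way split on whether $W$ lies in $\widehat{H_k}$, in $H_k$, or on the boundary---handled respectively by the month-ending trigger, the $|V|$ stale-cycle bound from $\sigma$ being winning, and the level-$(k-1)$ hard bound through what you aptly isolate as the lockstep identity $\mathsf{c}(k,W)=\mathsf{c}(k,W)_{\textup{start}}+\mathsf{c}(k-1,W)$.

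One remark: your assertion that each $k$-shift moves $H_k$ \emph{strictly up} in $<$ is true but not immediate from the definition, since a $k$-shift picks the $<$-\emph{minimal} eligible half-space, not the $<$-successor of the current one. The missing link is that your own lockstep identity, together with \autoref{cl-cancel} (so $\mathsf{c}(k-1,W)\geq 0$), gives $\mathsf{c}(k,W)\geq\mathsf{c}(k,W)_{\textup{start}}$ throughout each $k$-month; hence a weight saturated at one $k$-shift remains saturated at all later ones, the set of ineligible half-spaces is monotone within the $k$-year, and the $<$-minimum of the (shrinking) eligible set strictly increases. The paper's proof has the same elision---it simply cites \autoref{pr:sib} for $N\leq\?L(k)$ without arguing the $H_k^n$ are distinct---so you are at the same level of rigor, and in fact your lockstep formulation already supplies the tool to close the gap.
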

\begin{proof}
  We prove the two statements by nested induction, first on $k$ and
  second on the sequence of $k$-years seen so far.

  Let us start with~\eqref{cl-p1-1}.  For the initial $k$-year, and
  for $k$-years that begin just after a ${>}k$-shift or a
  ${>}k$-cancellation, since then $\mathsf{c}(k,W)=0$,
  \eqref{cl-p1-1} holds trivially.  We are left with the case of
  a $k$-year that begins just after a $k$-cancellation.  We can assume
  using the secondary induction hypothesis that~\eqref{cl-p1-2} holds
  at the end of the previous $k$-year.  Consider then some $W_i$ in
  $\vsp{H_k}$ that fails the $k$-soft upper bound just before that
  $k$-cancellation.  At that time, since $\vsp{H_k}$ was not changed
  by the $k$-cancellation, \eqref{cl-p1-2} applies and $\mathsf
  c(k,W_i)<\?U(k)+u(k)$.  Therefore, at the beginning of the $k$-year,
  $\mathsf{c}(k,W_i)<\?U(k)+u(k)-u(k)\cdot\vec x(i)$, and thus
  $\mathsf c(k,W_i)<\?U(k)$ since $\vec x(i)>0$.

  By the secondary induction, it remains to establish~\eqref{cl-p1-2}
  for every $k$-year such that~\eqref{cl-p1-1} held at its
  beginning---this is the heart of the proof.  Let
  $H_k^1,H_k^2,\ldots,H_k^N$ be the sequence of $k$-dimensional open
  half-spaces considered during the $N$ $k$-months spanned by
  the current $k$-year so far, where $N\leq\?L(k)$ by \autoref{pr:sib}.  We know that all these open
  half-spaces define the same vector space
  $\vsp{H_k^1}=\vsp{H_k^2}=\cdots=\vsp{H_k^N}$; let $W$ belong to that
  space.

  If $W$ satisfies the soft bound, there is nothing to be done.
  Otherwise, by the construction of $\widetilde{\sigma}$ and the
  assumption of \eqref{cl-p1-1} at the beginning of the $k$-year,
  there exists a \emph{first} $k$-month in this sequence, say the
  $L$th for some $1\leq L<N$, after which $W$ fails the soft bound
  onward.  Then, during the $k$-months $1,\ldots,L$, $\mathsf
  c(k,W)<\?U(k)$, and for all $n\in\{L+1,\dots,N\}$, we know that $w$
  belongs to the closure $\overline{H_k^n}$.

  Consider the $n$th $k$-month for $n\in\{L+1,\ldots,N\}$ in the
  current $k$-year; we want to bound the increase on $\mathsf{c}(k,W)$
  during that $k$-month.  There are two cases:
  \begin{description}
  \item[If {\boldmath$W\in H_k^n$}] then the $k$-dimensional space in
    the current colour is left unchanged during the $n$th $k$-month.
    In turn, this means that no vertex of the game graph can be
    visited twice during that $k$-month while forming a cycle of
    weight $W$, as otherwise $\sigma$ would allow to form a cycle with
    effect inside $H_d\cup\cdots\cup H^n_k$ and Player~1 would lose.
    Therefore, cycles with weight $W$ that are closed during the $n$th
    $k$-month can only be formed by consuming edges from the simple
    path at the beginning of the $k$-month.  Hence, $\mathsf c(k,W)$
    can be increased by at most $|V|$.
  \item[Otherwise] $W$ belongs to the boundary
    $\overline{H_k^n}\setminus H_k^n$ of $H_k^n$ and thus $k>1$.  In
    this case, during the $n$th $k$-month, $\mathsf c(k,W)$ can only
    be increased by at most the maximal value of $\mathsf c(k-1,W)$
    during the same $k$-month.  This is because $\mathsf c(k-1,W)$ is
    $0$ at the beginning of the $n$th $k$-month, and
    thereafter it can only decrease through ${<}k$-cancellations (or
    that $k$-month would have ended), which decrease $\mathsf
    c(k,W)$ by the same value.  By the main induction hypothesis
    for~\eqref{cl-p1-1} with $W\in
    \vsp{H_{k-1}}\subseteq(\overline{H_k^n}\setminus H_k^n)$, $\mathsf
    c(k-1,W)$ is less than $\?U(k-1)+u(k-1)$.
  \end{description}
  We conclude that, during the current $k$-year, since $N-L\leq\?L(k)$,
  \begin{itemize}
  \item if $k=1$, $\mathsf c(k,W)-\?U(k)$ is less than $\?L(1)\cdot
    |V|=2|V|< u(1)$, and
  \item if $k>1$, 
    \begin{align*}
      \mathsf c(k,W)-\?U(k)&<\?L(k)\cdot\max(|V|,\?U(k-1)+u(k-1))\\
      &<\?L(k)\cdot 2\?U(k-1)\\
      &<(4|V|\cdot\|E\|)^{(d+2)^2}/2\cdot 2(4|V|\cdot\|E\|)^{2(k-1)(d+2)^2}\\
      &= u(k)\;.\qedhere
    \end{align*}
  \end{itemize}
  \end{proof}

\begin{proof}[Proof of \autoref{lem-p1}]
  By the previous claims, $\widetilde{\sigma}$ is winning for Player~1
  in the infinite bounding game, and thanks to the $d$-hard bound, it
  ensures that the norm of the current energy is bounded by
  $\|w(\gamma)\|+\sum_W (\?U(d)+u(d))\cdot \|W\|$ where $W$ ranges
  over the total weights of simple cycles in the game graph, and
  $w(\gamma)$ is the weight of a simple path.  Hence the norms
  $\|w(\gamma)\|$ and $\|W\|$ are bounded by $(|V|\cdot\|E\|)^d$.
  Finally, there are at most $(2(|V|\|E\|)^d+1)^d$ different total
  weights of simple cycles $W$.
\end{proof}
\fi


\section{Concluding Remarks}
\label{sec-concl}
In this paper, we have shown in \autoref{cor:energy.upper}
that fixed-dimensional energy games can be solved
\ifshort\pagebreak\fi
in pseudo-polynomial time, regardless of whether the initial credit is
arbitrary or given.  For the variant with given initial credit, this
closes a large complexity gap between the \TOWER\ upper bounds of
\citet*{brazdil10} and the lower bounds of \citet{courtois14},
and also settles the complexity of simulation problems between VASS
and finite state systems~\citep{courtois14}:
\begin{corollary}
\label{cor-2exp}
  The given initial credit problem for multi-dimensional energy games is
  \EXP[2]-complete, and \EXP-complete in fixed dimension $d\geq 4$.
\end{corollary}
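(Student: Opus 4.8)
The plan is to combine the parameterised upper bound of \autoref{cor:given} with the lower bounds already available in the literature. For the two hardness directions there is essentially nothing to do: \citet{courtois14} show that the given initial credit problem for energy games is \EXP[2]-hard in general, and \EXP-hard already in every fixed dimension $d\geq 4$, so I would simply invoke those results.

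For the \EXP[2] upper bound, I would let $n$ denote the size of an input $(\tuple{V,E,d},v_0,\vec b)$ under the binary encoding of weights and of $\vec b$, so that $|V|\leq n$, $d\leq n$, and $\|E\|,\|\vec b\|\leq 2^{n}$. Substituting these into the running time $(|V|\cdot\|E\|\cdot(\|\vec b\|+1))^{2^{O(d\log d)}}$ of \autoref{cor:given}, the base is at most $2^{O(n)}$ and the exponent is at most $2^{O(n\log n)}$, so the whole quantity is bounded by $2^{2^{O(n\log n)}}$, that is, double-exponential in $n$; hence the problem is in \EXP[2], and with \citeauthor{courtois14}'s matching hardness it is \EXP[2]-complete.

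For the fixed-dimension statement I would instead treat $d$ as a constant, so that $2^{O(d\log d)}$ collapses to $O(1)$ and \autoref{cor:given} yields running time $(|V|\cdot\|E\|\cdot(\|\vec b\|+1))^{O(1)}$, which under the binary encoding is $2^{O(n)}$ and hence in \EXP; for $d\geq 4$ this meets \citeauthor{courtois14}'s \EXP-hardness, giving \EXP-completeness. I would also remark in passing that for $d\leq 3$ these bounds do not pin down the exact complexity, $d=2$ being \P-complete by \citet{chaloupka13}.

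There is no real obstacle here beyond careful bookkeeping: all the difficulty has already been absorbed into \autoref{cor:given}, and ultimately into the small hypercube property of \autoref{lem-shp}, whose point is precisely that the bound is polynomial, rather than exponential, in $|V|$. It is exactly this feature that turns the \TOWER\ upper bound of \citet{brazdil10} into a double-exponential one, and a constant-dimension pseudo-polynomial bound into membership in \EXP.
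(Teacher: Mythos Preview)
Your proposal is correct and matches the paper's approach: the paper itself offers no explicit proof of this corollary, treating it as an immediate consequence of \autoref{cor:given} together with the hardness results of \citet{courtois14}, which is exactly what you do. One small imprecision in your closing aside: \citeauthor{chaloupka13}'s \P-completeness result for $d=2$ is stated only for unit updates, not for arbitrary binary-encoded weights.
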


The main direction for extending these results is to consider
a \emph{parity} condition on top of the energy
condition.  \Citet*{abdulla13} show that multi-dimensional energy
parity games with given initial credit are decidable.  They do not
provide any complexity upper bounds---although one might be able to
show \TOWER\ upper bounds from the memory bounds on winning strategies
shown by \citet[\lemmaautorefname~3]{chatterjee14}---, leaving a large
complexity gap with \EXP[2]-hardness.  This gap also impacts the
complexity of \emph{weak simulation} games between VASS and finite
state systems~\citep{abdulla13}.

\ifshort\subsubsection*{Acknowledgements.}\else\subsection*{Acknowledgements}\fi
The authors thank Dmitry Chistikov for his assistance in proving
\autoref{l:Exp}, the anonymous reviewers for their insightful comments, and
Christoph Haase, J\'er\^ome Leroux, and Claudine Picaronny for helpful
\mbox{discussions}\ifshort.\else\ on linear algebra.\fi

\appendix
\section{Linear Algebra}
\label{sec-algebra}
\subsubsection*{An Alternatives Lemma.}

Given a norm $M$ in $\+N$, we write $\mathbb{Z}^\pm_M$ for the set of
integers $\{-M, \ldots, M\}$.
We say that a vector space, cone, or half-space
in $\mathbb{Q}^d$ is \emph{$M$-generated} iff 
it can be generated by vectors in~$(\mathbb{Z}^\pm_M)^d$.  We
use \emph{Weyl's Theorem}:\ifshort\relax\else\footnote{`In these days the angel of topology and 
the devil of abstract algebra fight for the soul of 
each individual mathematical domain.' \cite{Weyl39}}\fi
\begin{theorem}[\citet{weyl50}, \theoremautorefname~1]
\label{th:weyl}
Any $d$-dimensional cone generated by a finite set $\vec A$ in
$\mathbb{Q}^d$ is the intersection of a finite number of closed
half-spaces, where the boundary of each half-space contains $d - 1$
linearly independent vectors from~$\vec A$.
\end{theorem}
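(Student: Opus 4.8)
This is a classical fact (the one the authors attribute to Weyl); the plan is to recover it from the \emph{facet} structure of the finitely generated cone $C\eqdef\cone{\vec A}$, which, since $\vsp{\vec A}=\+Q^d$, is full-dimensional and, being finitely generated, convex and closed. For every set $S\subseteq\vec A$ of $d-1$ linearly independent vectors, $\vsp{S}$ is a hyperplane and bounds two closed half-spaces; I call such a half-space \emph{admissible} if it contains $C$, and I let $P$ be the intersection of all admissible half-spaces, a finite intersection because $\vec A$ is finite. The inclusion $C\subseteq P$ is immediate and every admissible half-space is by construction of the required form, so the whole content of the theorem is the reverse inclusion $P\subseteq C$.

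To prove $P\subseteq C$ I would argue by contraposition: given $x\notin C$, I produce an admissible half-space that excludes $x$. By Farkas' lemma --- equivalently, by separating $x$ from the closed convex cone $C$ --- there is a nonzero $\vec c\in\+Q^d$ with $\vec c\cdot\vec a\le 0$ for all $\vec a\in\vec A$ and $\vec c\cdot x>0$, so the half-space $\{\vec y:\vec c\cdot\vec y\le 0\}$ contains $C$ but not $x$, although its boundary $\vec c^{\perp}$ need not pass through $d-1$ independent generators. The core of the argument is a \emph{tilting} step that makes it do so. Writing $\vec A_0\eqdef\vec A\cap\vec c^{\perp}$ and $r\eqdef\dim\vsp{\vec A_0}$, one checks the small but decisive fact that $\vec A\cap\vsp{\vec A_0}=\vec A_0$, so every $\vec a\in\vec A\setminus\vec A_0$ satisfies both $\vec c\cdot\vec a<0$ and $\vec a\notin\vsp{\vec A_0}$. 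If $r<d-1$, the cone $D\eqdef\{\vec c'\in\vsp{\vec A_0}^{\perp}:\vec c'\cdot\vec a\le 0\text{ for all }\vec a\in\vec A\}$ lies in a space of dimension $d-r\ge 2$, contains $\vec c$ in its relative interior, and is pointed (a line in $D$ would be orthogonal to $\vsp{\vec A}=\+Q^d$, hence zero); one may therefore move $\vec c$ to a boundary point $\vec c'$ of $D$ close enough that $\vec c'\cdot x>0$ still holds. Then $\vec A\cap(\vec c')^{\perp}$ contains $\vec A_0$ together with some $\vec a\in\vec A\setminus\vec A_0$, and since that $\vec a\notin\vsp{\vec A_0}$ we get $\dim\vsp{\vec A\cap(\vec c')^{\perp}}\ge r+1$; iterating at most $d-1$ times produces an admissible half-space witnessing $x\notin P$.

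Putting this together, $P=C$ and $P$ is an intersection of finitely many closed half-spaces, each with boundary of the form $\vsp{S}$ for an $S\subseteq\vec A$ consisting of $d-1$ linearly independent vectors --- exactly the asserted form. The step I expect to be delicate is the tilting: one must verify that $D$ is pointed and full-dimensional inside $\vsp{\vec A_0}^{\perp}$, that a boundary point of $D$ can be chosen near enough to $\vec c$ to keep $\vec c'\cdot x>0$, and that the newly captured generator genuinely raises $\dim\vsp{\vec A\cap(\vec c')^{\perp}}$ --- the last of which rests entirely on $\vec A\cap\vsp{\vec A_0}=\vec A_0$. An alternative that bypasses the tilting, at the price of quoting more polyhedral combinatorics, is to invoke that a full-dimensional polyhedral cone equals the intersection of its facet-defining half-spaces and that every proper face of $\cone{\vec A}$ is of the form $\cone{\vec A\cap F}$; each facet $F$ then satisfies $\dim\vsp{\vec A\cap F}=d-1$, so its supporting hyperplane is spanned by $d-1$ linearly independent generators of $\vec A$.
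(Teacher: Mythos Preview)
The paper does not prove this statement: it is quoted as a classical result of Weyl and used as a black box in the proof of \autoref{l:alt}. There is therefore no ``paper's own proof'' to compare against, and what you have written is a self-contained reconstruction of the classical fact.

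Your argument is essentially correct. The tilting step works, but the phrase ``a boundary point of $D$ can be chosen near enough to $\vec c$'' is slightly misleading: proximity is not what you need, and in general the boundary of $D$ need not be close to $\vec c$. What you actually need is some nonzero $\vec c'\in\partial D$ with $\vec c'\cdot x>0$, and this follows directly because $D$ is a pointed polyhedral cone (hence generated by its extreme rays): if every extreme ray $\vec r$ satisfied $\vec r\cdot x\le 0$, then $\vec c\cdot x\le 0$ as well, contradicting the choice of $\vec c$. Any such extreme ray lies on $\partial D$, makes at least one constraint $\vec c'\cdot\vec a=0$ with $\vec a\in\vec A\setminus\vec A_0$ active, and your ``decisive fact'' $\vec A\cap\vsp{\vec A_0}=\vec A_0$ then guarantees the dimension increase. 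With that amendment the iteration terminates in at most $d-1$ steps and yields an admissible half-space excluding $x$, as you claim.

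Your alternative route via facets is the standard textbook presentation and is arguably cleaner: once one knows that a full-dimensional polyhedral cone equals the intersection of its facet-defining half-spaces and that each face of $\cone{\vec A}$ is $\cone{\vec A\cap F}$, the theorem is immediate. Either argument would be acceptable here; the paper simply takes the result for granted.
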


\begin{lemma}
\label{l:alt}
Suppose $\vec A \subseteq (\mathbb{Z}^\pm_M)^d$ is contained in an
$M$-generated subspace $S$ of $\mathbb{Q}^d$.  Either $\vec A$ is
contained in some $M$-generated closed half-space of $S$, or
$\tsum_{\vec{a} \in\vec A} \mathbb{Q}_{> 0} \vec{a}$ contains the zero
vector.
\end{lemma}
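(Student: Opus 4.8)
The plan is to establish this dichotomy (essentially a norm-aware version of Gordan's alternative) by a short case analysis, using Weyl's Theorem (\autoref{th:weyl}) to produce the separating half-space in the one genuinely geometric case; Weyl's Theorem is also exactly what guarantees that this half-space is $M$-generated rather than merely rational. Throughout I would write $S'\eqdef\vsp{\vec A}$, and note that, since $\vec A\subseteq(\mathbb{Z}^\pm_M)^d$, the subspace $S'$ is $M$-generated and $S'\subseteq S$.

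If $S'\subsetneq S$, I would first pick a basis of $S'$ among the vectors of $\vec A$, then extend it by Steinitz exchange to a basis of $S$ using vectors from a fixed $M$-generating set of $S$, and finally drop one of the newly added vectors: this yields a subspace $T$ of dimension $\dim S-1$ with $S'\subseteq T\subseteq S$ that is spanned by vectors of $(\mathbb{Z}^\pm_M)^d$, hence is $M$-generated. Both closed half-spaces of $S$ with boundary $T$ then contain $T\supseteq S'\supseteq\vec A$, so the first alternative holds. (The degenerate situations $\dim S=0$ or $\vec A=\emptyset$ make one of the alternatives immediate and can be dispatched separately.)

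It remains to treat $S'=S$. If in addition $\cone{\vec A}=S$, then $-\vec a\in\cone{\vec A}$ for every $\vec a\in\vec A$, so I would write $-\vec a=\tsum_{\vec a'\in\vec A}\mu_{\vec a,\vec a'}\,\vec a'$ with all $\mu_{\vec a,\vec a'}\geq 0$, sum these identities over $\vec a\in\vec A$, and rearrange to obtain $\vec 0=\tsum_{\vec a'\in\vec A}\bigl(1+\tsum_{\vec a\in\vec A}\mu_{\vec a,\vec a'}\bigr)\vec a'$, a combination of $\vec A$ with all coefficients $\geq 1>0$; this is the second alternative. If instead $\cone{\vec A}\subsetneq S$, then, applying Weyl's Theorem inside $S$ (viewed as $\mathbb{Q}^{\dim S}$ via any basis), the cone $\cone{\vec A}$, which is full-dimensional in $S$ since $\vsp{\vec A}=S$, is the intersection of finitely many — and therefore at least one — closed half-spaces of $S$ whose boundaries are each spanned by $\dim S-1$ linearly independent vectors drawn from $\vec A$. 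Any such half-space $H$ is $M$-generated, and $\vec A\subseteq\cone{\vec A}\subseteq H$, so the first alternative holds.

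The main obstacle I anticipate is keeping track of $M$-generatedness rather than mere rationality: in the $\cone{\vec A}\subsetneq S$ case this is precisely the refinement carried by \autoref{th:weyl} (each facet hyperplane is carried by $d-1$ of the generators, all of norm $\leq M$), while in the $S'\subsetneq S$ case it is the observation that $S$ being $M$-generated lets one complete $S'$ to a hyperplane of $S$ using only vectors of norm $\leq M$. The strictly-positive averaging identity and the purely bookkeeping reduction of Weyl's statement from $\mathbb{Q}^d$ to the subspace $S$ should be routine.
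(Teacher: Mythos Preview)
Your proof is correct and follows essentially the same route as the paper: a case split on whether $\cone{\vec A}$ fills its span, with Weyl's Theorem supplying the $M$-generated facet in the proper-cone case and a strictly-positive averaging identity handling the full-span case. The paper organises the cases slightly differently---it always works inside $\vsp{\vec A}$ and then lifts the resulting half-space to $S$ in one ``easy to obtain'' step---whereas you separate $\vsp{\vec A}\subsetneq S$ up front and handle it by a direct Steinitz-exchange argument that bypasses Weyl; this is a minor reorganisation, not a different idea, and it has the benefit of making explicit the lifting step the paper leaves implicit.
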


\begin{proof}
If the $\cone{\vec A}$ is not the whole space $\vsp{\vec A}$,
then by \autoref{th:weyl}, it is contained in an $M$-generated closed
half-space $\overline{H}$ of $\vsp{\vec A}$.  Since $S$ is
$M$-generated, it is easy to obtain from $\overline{H}$ an
$M$-generated closed half-space of $S$ that contains~$\cone{\vec
A}\supseteq\vec A$.

Otherwise, if $\cone{\vec A}$ is the whole space $\vsp{\vec A}$, it
contains in particular the vectors $- \tsum_{\vec{a} \in\vec
A} \vec{a}$ (from $\vsp{\vec A}$) and $\tsum_{\vec{a} \in\vec
A} \vec{a}$ (from $\cone{\vec A}$), and thus $\tsum_{\vec{a} \in\vec
A} \mathbb{Q}_{> 0} \vec{a}$ contains the zero vector.
\end{proof}

\subsubsection*{Small Solutions.}
We also use a lemma that bounds the positive integral solutions on
systems of linear equations.  The lemma is a corollary of the
following result of \citet{vonzurgathen78}:
\begin{theorem}[\citet{vonzurgathen78}]
\label{th:vzgs}
Let $\vec A$, $\vec b$, $\vec C$, $\vec d$ be 
$m \times n$-, $m \times 1$-, $p \times n$-, $p \times 1$-matrices
respectively with integer entries.
The rank of $\vec A$ is $r$, 
and $s$ is the rank of the $(m + p) \times n$-matrix
$\left(\begin{smallmatrix} \vec A \\ \vec C \end{smallmatrix}\right)$.
Let $M$ be an upper bound on the absolute values of 
those $(s - 1) \times (s - 1)$- or $s \times s$-subdeterminants of 
the $(m + p) \times (n + 1)$-matrix
$\left(\begin{smallmatrix} \vec A & \vec b \\ \vec C & \vec d \end{smallmatrix}\right)$,
which are formed with at least $r$ rows from $(\vec A, \vec b)$.
If $\vec A \vec{x} = \vec b$ and $\vec C \vec{x} \geq \vec d$ 
have a common integer solution,
then they have one with coefficients bounded by $(n + 1) M$.
\end{theorem}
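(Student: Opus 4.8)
This is a classical bound on small solutions of mixed systems of integer linear (in)equalities, and I would only sketch the shape of the argument, since for our purposes it is the quantitative conclusion that matters; the complete verification is in \citet{vonzurgathen78}.

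The plan is to first normalise the system: by deleting every equation of $\vec A\vec x=\vec b$ that is a rational combination of the others --- legitimate because the existence of an integer, hence rational, solution forces the corresponding right-hand side to be the same combination --- we may assume $\vec A$ has full row rank $r=m$. The feasible region $P=\{\vec x\in\mathbb Q^n:\vec A\vec x=\vec b,\ \vec C\vec x\ge\vec d\}$ is then a non-empty rational polyhedron whose lineality space is $L=\ker\left(\begin{smallmatrix}\vec A\\\vec C\end{smallmatrix}\right)$, of dimension $n-s$. The core step is to pass from an arbitrary integer point of $P$ to an integer point lying on an inclusion-minimal face. Starting from an integer $\vec x^{\ast}\in P$, one keeps track of the set $J$ of inequality rows tight at the current point; while $\left(\begin{smallmatrix}\vec A\\\vec C_J\end{smallmatrix}\right)$ has rank below $s$, pick a non-zero \emph{integer} vector $\vec y$ in its kernel lying outside $L$ and translate the current point by integer multiples of $\vec y$ in a direction violating no inequality. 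Either some further inequality becomes tight at an integer step --- enlarge $J$ and repeat --- or the translates stay feasible forever, which forces $\vec y\in L$, whence one quotients $P$ and $\mathbb Z^n$ by $L$ and reduces $n$. After finitely many steps one reaches an integer point $\vec x\in P$ on a face $F=\{\vec x:\vec A\vec x=\vec b,\ \vec C_J\vec x=\vec d_J\}$ whose defining system has rank $s$, so that $\vec x$ is determined modulo $L$.

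Finally one bounds such an integer point $\vec x\in F$. Since $F$ is cut out by $s$ linearly independent integer equations --- all $r$ rows of $(\vec A,\vec b)$ together with $s-r$ of the tight rows of $(\vec C,\vec d)$ --- a suitable integer point of $F$, obtained by translating within the lattice $\mathbb Z^n\cap L$ so as to make the $n-s$ free coordinates small, has each of its coordinates expressible, via Cramer's rule after eliminating those free coordinates, as a ratio of two subdeterminants of $\left(\begin{smallmatrix}\vec A&\vec b\\\vec C&\vec d\end{smallmatrix}\right)$ of size $s$ or $s-1$, each of which uses at least $r$ rows of $(\vec A,\vec b)$; bounding numerators by $M$, denominators by $1$, and carefully accounting for the elimination step and for the at most $n$ coordinates yields the claimed $(n+1)M$. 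The step needing the most care is exactly this bookkeeping around $s$ versus $n$: controlling the lineality space $L$, keeping the walk along $\vec y$ integral, and verifying that the relevant Cramer denominator is a non-zero subdeterminant of absolute value at least $1$ built from $\ge r$ rows of $(\vec A,\vec b)$ --- which is precisely what the hypothesis of the theorem is tailored to guarantee.
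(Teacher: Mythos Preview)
The paper does not give its own proof of this theorem: it is quoted verbatim as a result of \citet{vonzurgathen78} and used as a black box to derive \autoref{l:small}. There is therefore nothing to compare your sketch against in the paper itself.

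Your outline is broadly the shape of the original argument --- reduce to full row rank, walk to a minimal face while staying integral, then bound a point on that face via Cramer-type determinants --- but one step is not quite right as written. When you translate the current integer point by integer multiples of $\vec y$, you claim that ``either some further inequality becomes tight at an integer step'' or the ray stays feasible forever. In general neither happens: moving along $\vec y$ you may exit $P$ strictly between two consecutive integer multiples, so no new inequality is tight at an integer point and yet the translates do not remain feasible. The actual argument instead moves to the last feasible integer multiple and shows that the resulting point, while perhaps not on a strictly smaller face, is close enough to the boundary that a separate bounding argument applies; alternatively, one works over $\mathbb Q$ to reach a vertex (modulo $L$) and then rounds, which is where the factor $(n+1)$ and the careful choice of which subdeterminants appear in $M$ come from. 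Since the paper only needs the statement, this gap is immaterial here, but if you want a self-contained sketch you should either fix the walk or defer entirely to the cited reference as the paper does.
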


\begin{lemma}[Small Solutions Lemma]
\label{l:small}
Suppose $\vec A$ is a $d \times n$-matrix with 
 entries from $\mathbb{Z}^\pm_M$ and mutually distinct columns.
 If $\vec A \vec{x} = \vec{0}$ has a solution in positive rationals,
 then it has a solution in positive integers bounded by 
 $(2 (M + 1))^{(r + 2)^2}$, where $r$ is the rank of~$\vec A$.
\end{lemma}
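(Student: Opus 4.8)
The plan is to invoke the theorem of von zur Gathen and Sieveking (\autoref{th:vzgs}) for a system that encodes positive integrality directly. I set $\vec b\eqdef\vec 0$, $\vec C\eqdef I_n$ (the $n\times n$ identity), and $\vec d\eqdef\vec 1$ (the all-ones column), so that a common integer solution of $\vec A\vec x=\vec b$ and $\vec C\vec x\ge\vec d$ is exactly a solution of $\vec A\vec x=\vec 0$ in positive integers; such a solution exists under the hypothesis, since multiplying a positive-rational solution of $\vec A\vec x=\vec 0$ by a common denominator yields one whose entries are positive integers, hence $\ge 1$. With this choice $m=d$, $p=n$, the rank of $\vec A$ is $r$, and $s\eqdef\operatorname{rank}\left(\begin{smallmatrix}\vec A\\ I_n\end{smallmatrix}\right)=n$ because $I_n$ already has rank $n$. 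Thus \autoref{th:vzgs} produces a positive-integer solution all of whose entries are at most $(n+1)\,M'$, where $M'$ (not the $M$ of the statement) is the largest absolute value of an $(n-1)$- or $n$-dimensional subdeterminant of $\left(\begin{smallmatrix}\vec A&\vec 0\\ I_n&\vec 1\end{smallmatrix}\right)$ that uses at least $r$ rows from the top block $(\vec A\mid\vec 0)$. Everything then comes down to bounding $n$ and $M'$ in terms of $M$ and $r$.

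To bound $n$ I use that the columns of $\vec A$ are mutually distinct and lie in the $r$-dimensional column space $S$. Choosing $r$ rows on which some $r\times r$ submatrix of $\vec A$ is invertible, the projection $\pi\colon\mathbb{Q}^d\to\mathbb{Q}^r$ onto those rows sends a basis of $S$ to a basis of $\mathbb{Q}^r$ and is therefore injective on $S$; hence it maps the $n$ columns to $n$ distinct points of $(\mathbb{Z}^\pm_M)^r$, giving $n\le(2M+1)^r$.

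To bound $M'$, consider a contributing $t\times t$ submatrix $\vec B$ (with $t\in\{n-1,n\}$), whose top rows $R_T$ come from $\vec A$ (with $|R_T|\ge r$) and whose bottom rows $R_B$ come from $(I_n\mid\vec 1)$. Each bottom row has at most two nonzero entries, and the last column is nonzero only on bottom rows; expanding $\det\vec B$ by the Laplace rule along the rows $R_B$, the complementary top blocks are submatrices of $\vec A$, which forces $|R_T|=r$ (as $\vec A$ has no larger nonsingular submatrix), and the only surviving column choices for $R_B$ are the $I_n$-indices of its own rows, optionally with the last column substituted in for one of them. Hence $|\det\vec B|$ is at most $|R_B|=t-r\le n-r$ times $D_r$, the largest absolute value of an $r\times r$ subdeterminant of $\vec A$, and $D_r\le r!\,M^r\le(rM)^r$. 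Combining, $(n+1)\,M'\le 2\,(2M+1)^{2r}(rM)^r$, and a routine base-$2$ logarithm estimate — using $\log_2 r\le r$, $\log_2(2M+1)\le\log_2(2(M+1))$, and $\log_2(2(M+1))\ge 1$ — bounds this by $(2(M+1))^{(r+2)^2}$, the exponent comparison reducing to $r^2+3r+1\le r^2+4r+4$. The degenerate case $r=0$, where $\vec A$ is a zero matrix and distinctness forces $n\le 1$, is immediate.

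The crux is the estimate on $M'$: since $n$ may be as large as $(2M+1)^r$, a direct Hadamard-type bound on subdeterminants of size up to $n$ would be doubly exponential in $r$; the point is to exploit the sparse identity rows together with the hypothesis of \autoref{th:vzgs} that at least $r$ rows are drawn from $(\vec A,\vec b)$, which is precisely what collapses every relevant subdeterminant to an $r\times r$ subdeterminant of $\vec A$.
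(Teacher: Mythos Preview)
Your proof is correct and follows essentially the same route as the paper: invoke \autoref{th:vzgs} with $\vec b=\vec 0$, $\vec C=I_n$, $\vec d=\vec 1$, then bound $n\le(2M+1)^r$ together with the relevant subdeterminants. The paper achieves this by first reducing to $d=r$ and asserting (without detail) that every subdeterminant is at most $n^{d+1}M^d$; your projection argument is precisely what justifies that reduction, and your Laplace expansion spells out the subdeterminant bound more carefully and slightly more sharply.
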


\begin{proof}
We can assume that $d = r$.
Apply \autoref{th:vzgs} with 
$\vec b$ the $d$-dimensional zero vector,
$\vec C$ the $n$-dimensional identity matrix and
$\vec d$ the $n$-dimensional vector of ones.
The absolute value of any subdeterminant of
$\left(\begin{smallmatrix}
 \vec A           & \vec{0} \\
 \mathrm{Id} & \vec{1}
 \end{smallmatrix}\right)$
is at most $n^{d + 1} M^d$.
Since $n \leq (2M + 1)^d$, we have that
\begin{multline*}
(n + 1) n^{d + 1} M^d \leq
2 (2(M + 1))^d (2(M + 1))^{d (d + 1)} M^d = \\
2^{(d + 1)^2} (M + 1)^{d (d + 3)} \leq
(2 (M + 1))^{(d + 2)^2}.
\qedhere
\end{multline*}
\end{proof}


\nocite{weyl35}
\bibliographystyle{abbrvnat}
\bibliography{journalsabbr,conferences,energy}
  
\end{document}